\DeclareSymbolFont{matha}{OML}{txmi}{m}{it}% txfonts
\DeclareMathSymbol{\varv}{\mathord}{matha}{118}
\begin{document}
	\title{Performance Analysis of Cell-Free Massive MIMO Systems: A Stochastic Geometry Approach}
	\author{Anastasios Papazafeiropoulos, Pandelis Kourtessis, Marco Di Renzo, Symeon Chatzinotas, and John M. Senior \vspace{2mm} \\
		\thanks{A. Papazafeiropoulos is with the Communications and Intelligent Systems Research Group, University of Hertfordshire, Hatfield, U. K. and with SnT at the University of Luxembourg, Luxembourg. P. Kourtessis and J.M. Senior are with the Communications and Intelligent Systems Research Group, University of Hertfordshire, Hatfield, U. K. M. Di Renzo is with the Laboratoire des Signaux et Syst\`emes, CNRS, CentraleSup\'elec, Universit\'e Paris Sud, Universit\'e Paris-Saclay, France. S. Chatzinotas is with the SnT at the University of Luxembourg, Luxembourg. E-mails: tapapazaf@gmail.com, \{p.kourtessis,j.m.senior\}@herts.ac.uk, marco.direnzo@l2s.centralesupelec.fr, symeon.chatzinotas@uni.lu.}
			\thanks{This work is supported in part by the National Research Fund, Luxembourg, under the projects ECLECTIC, DISBUS, 5G-SKY.}
	}
	
	\maketitle
	% 
	% %%%%%%%%%%%%%%%%%%%%%%%%%%%%%%%%%%%%%%%%%%%%%%%%%%%%%%%%%%%%%%%%%%%%%
	\vspace{-1.4cm}
	% \maketitle\vspace{-6mm}
	
	\begin{abstract}
		Cell-free (CF) massive multiple-input-multiple-output (MIMO) has emerged as an alternative deployment for conventional cellular massive MIMO networks. As revealed by its name, this topology considers no cells, while a large number of multi-antenna access points (APs)  serves simultaneously a smaller number of users over the same time/frequency resources through time-division duplex (TDD) operation. Prior works relied on the strong assumption (quite idealized) that the APs are uniformly distributed, and actually, this randomness was considered during the simulation and not in the analysis. However, in practice, ongoing and future networks become denser and increasingly irregular. Having this in mind, we consider that the AP locations are modeled by means of a Poisson point process (PPP) which is a more realistic model for the spatial randomness than a grid or uniform deployment. In particular, by virtue of stochastic geometry tools, we derive both the downlink coverage probability and achievable rate. Notably, this is the only work providing the coverage probability and shedding light on this aspect of CF massive MIMO systems. Focusing on the extraction of interesting insights, we consider small-cells (SCs) as a benchmark for comparison. Among the findings, CF massive MIMO systems achieve both higher coverage and rate with comparison to SCs due to the properties of favorable propagation, channel hardening, and interference suppression. Especially, we showed for  both architectures  that increasing the AP density results in  a higher coverage which saturates after a certain value and increasing the number of users decreases the achievable rate  but CF massive MIMO systems take advantage of the aforementioned properties, and thus, outperform SCs. In general, the performance gap between CF massive MIMO systems and SCs is enhanced by increasing the AP density. Another interesting observation concerns that a higher path-loss exponent decreases the rate while the users closer to the APs affect more the performance in terms of the rate. 
	\end{abstract}
	
	\begin{keywords}
		Cell-free massive MIMO systems, stochastic geometry, heterogeneous networks, coverage probability, achievable rate.
	\end{keywords}
	
	\section{Introduction}
	The landscape of wireless communications is undergoing a rapid revolution~\cite{Shafi2017}. From video streaming and social networking to immersive technologies such as augmented and virtual reality (AR/VR) emerging applications are pushing mobile operators to evolve persistently and perpetually, in order to find the optimal cellular network architecture~\cite{Andrews2014,Ericsson2015,Bastug2017}. The forthcoming fifth-generation (5G) networks have {accepted} the adoption of the massive multiple-input-multiple-output (MIMO) technology~\cite{Wang2018}, where each base station (BS) equips a large number of antennas and exploits the spatial multiplexing of many users on the same time-frequency resources, in order to take advantage of the accompanying high spectral and energy efficiency, reliability, and simple signal processing~\cite{Larsson2014}. A collocated antenna array in each cell and multiple geographically distributed antennas arrays represent the two extreme ends of the topology spectrum~\cite{Larsson2014,Ngo2017}.
	
	\subsection{Prior Work}
	Conventional massive MIMO capitalize on the advantageous channel hardening and the favorable propagation phenomena, being consequences of the law of large numbers, e.g., see~\cite{Ngo2013} and references therein. In particular, channel hardening turns the multi-antenna fading channel gain into nearly deterministic~\cite{Ngo2017a} while favourable propagation turns the users’ channel vectors to almost orthogonal~\cite{Ngo2014}. Fundamentally, massive MIMO results as a scalable form of multi-user MIMO with respect to the number of BSs antennas, where the BSs are aware of the channel state information (CSI) and can employ simple linear precoding~\cite{Larsson2014,Hoydis2013}. In this direction, the relevant research has revealed that the CSI acquisition, limited by the channel coherence block, is degraded by means of the pilot contamination~\cite{Marzetta2010}. In addition, it has been shown that the transceiver hardware impairments have less impact on massive MIMO than on contemporary systems with a finite number of antennas~\cite{Bjornson2015,PapazafComLetter2016,Papazafeiropoulos2017,Papazafeiropoulos2017a,Papazafeiropoulos2018}.
	
	On the other hand, a co-processing technology, known as network MIMO~\cite{Shamai2001}, assumes a set of geographically distributed access points (APs) to serve jointly all the users by utilizing only local CSI at each AP, in order to keep its implementation feasible due to substantial backhaul overhead with a comparison to global CSI~\cite{Karakayali2006,Irmer2011}. In fact, the channels between the APs and the users are estimated by means of an uplink training phase and the property of channel reciprocity under a time-division duplex (TDD) design. Recently, the network MIMO notion has sprung again under the name cell-free (CF) massive MIMO, but herein, the number of APs, serving a smaller number of users, grows large~\cite{Ngo2017}. In such architecture, the cell boundaries are abrogated, and contrary to centralized massive MIMO systems serving poorly cell-edge users due to inter-cell interference, the quality of service is improved because of reduced distances between the users and the APs together with rich diversity against large-scale fading. Therefore, CF massive MIMO design enjoys the key principles from both network and massive MIMO, i.e., it reaps the benefits from its distributed nature being an increased macro-diversity gain while the information exchange among the APs is limited which results in better computational efficiency. In parallel, it exploits channel hardening and favorable propagation. Overall, the CF massive MIMO configuration is quite promising for next-generation systems because many users can be served simultaneously with high quality of service due it to high macro-diversity, low path-losses, and increased expected coverage, since the serving antennas are closer to the user.

	 With little available work in the literature at the beginning, the high potentials of CF massive MIMO have attracted a lot of scientific interest with~\cite{Nayebi2017,Interdonato2016,Ngo2018,Ngo2017, Zhang2018,Hu2019,Buzzi2017a,Buzzi2017,Bashar2018,Chen2018,Interdonato2019}. Specifically, in~\cite{Ngo2017}, the outperformance of CF massive MIMO against small cells (SCs) was depicted by taking into account for the effects of power control, pilot contamination, and imperfect CSI, while in~\cite{Nayebi2017}, the authors studied the performance of CF massive MIMO systems with zero-forcing (ZF) precoding, but with no pilot contamination, i.e., the assigned pilots to the users are mutually orthogonal. In~\cite{Interdonato2016}, downlink training was introduced in the design and it was showed that beamformed pilots improve the performance. Furthermore, in~\cite{Ngo2018}, the APs were enriched with multiple antennas that bring an increase to the array and diversity gains, and maximize energy efficiency. Moreover, in~\cite{Zhang2018}, the authors studied the impact of additive hardware impairments in CF massive MIMO systems while, in~\cite{Hu2019}, the achievable rate was derived by using practical low-resolution analog-to-digital converters. Furthermore, a user-centric approach, where each AP serves a selected number of users, has been suggested in\cite{Buzzi2017a,Buzzi2017} to provide larger achievable rates with reduced backhaul overhead. Also, one of the main challenges of distributed antenna systems, being the limited backhaul, was studied in~\cite{Bashar2018}. In~\cite{Chen2018}, the spatial randomness was taken into account to explore the assumptions that should hold for channel hardening and favorable propagation in CF massive MIMO systems. Regarding practical applications, a very promising implementation of CF Massive MIMO systems is through radio stripes as described in~\cite{Interdonato2019}.

	Unluckily, the majority of works in CF massive MIMO have assumed  a constant number of APs, being uniformly distributed in a finite region but this randomness is taken into account only during the simulations and not in the analysis. However, in practice, the locations of the APs follow highly irregular spatial patterns. 
	% The Wyner model~\cite{Shamai1997}, considering that the intercell interference is a constant factor of the total interference, is highly inaccurate while other models, assuming a fixed user or a small number of interfering BSs, are quite pessimistic~\cite{Goldsmith2005}.
	% As far as the authors are aware, the only work, having taken into account the spatial randomness in CF massive MIMO systems, investigates the assumption that should hold for channel hardening and favorable propagation~\cite{Chen2018}. 
	In particular, in the case of SCs, tractable and accurate models, describing the realistic randomness of the BS locations for single-input-single-output (SISO) channels were introduced in~\cite{Andrews2011}. Specifically, the authors relied on the theory of Poisson point processes (PPPs) which draws the number of BSs from a Poisson random number generator~\cite{Baccelli2010}, and provided analytical expressions for the coverage probability and the achievable rate. Towards this end, many studies, for example,~\cite{Kountouris2012} and~\cite{Dhillon2013} extended~\cite{Andrews2011} to a multi-user network and multiple antennas, respectively. Actually, this approach, applied in heterogeneous networks (HetNets), facilitates the modeling of the cell densification and provides a more accurate description with  comparison to the grid model for example. In fact, heterogeneous SCs have been suggested as a major technology to be implemented in 5G networks, since it improves dramatically the coverage, the spatial reuse, and boosts the spectral efficiency per unit area (see~\cite{Andrews2012,Kamel2016}, and references therein). 
	\subsection{Motivation}
	This work relies on the paramount observation that the advantageous CF massive MIMO systems have been evaluated under the unrealistic assumption of the uniform placement of the APs in a finite area. Moreover, there is no work in the area of CF massive MIMO systems with PPP distributed APs providing their coverage probability in analytical form. In fact, there are no available results considering that the APs are PPP distributed except the useful study in~\cite{Chen2018} that explored the validity of the favorable propagation and channel hardening properties but did not focus on the performance analysis. In previous works, the APs were assumed to be distributed uniformly that results in pessimistic conclusions. Notably, these models are highly idealized and inaccurate, especially, in the case of CF massive MIMO which are designed based on heterogeneous and ad hoc deployments. Hence, the fundamental question, addressed by this work, is ``how a more realistic randomness regarding the AP locations affects the performance of CF massive MIMO systems?''. Motivated by this arising need, we are going to establish the theoretical framework incorporating the randomness of the AP locations and identify the realistic potentials of CF massive MIMO systems before their final implementation. 
	\subsection{Contributions and Outcomes}
	The main contributions are summarized as follows.
	\begin{itemize}
		\item Contrary to existing work~\cite{Ngo2017}, which has proposed the concept of CF massive MIMO in the case of just uniform placement of the APs, we introduce a more realistic spatial randomness where the APs are PPP distributed. Also, contrary to~\cite{Chen2018} which examined whether the phenomena of channel hardening and favorable propagation appear in CF massive MIMO systems, we focus on their performance analysis in terms of derivation of analytical results.
		% taking into account the geographical randomness by means of stochastic geometry.
		
		% As far as the authors are aware, the only work, having taken into account the spatial randomness in CF massive MIMO systems, investigates the assumption that should hold for channel hardening and favorable propagation~\cite{Chen2018}. 
		
		\item We carry out an asymptotic performance analysis by deriving the deterministic equivalent (DE)\footnote{The DE analysis is a tool of random matrix theory (RMT) to achieve deterministic expressions concerning matrices when their size grows large but with a given ratio. The DEs have been widely employed in massive MIMO systems by providing deterministic expressions, and thus, avoiding the need for lengthy Monte Carlo simulations~\cite{Couillet2011,Papazafeiropoulos2015a}.} downlink signal-to-interference-plus-noise ratio (SINR) with maximum ratio transmission (MRT), while we have assumed that the channel is estimated during an uplink training phase including pilot contamination. It is worthwhile to mention that there are no other prior works providing DE expressions for CF massive MIMO systems.
		\item We derive the downlink coverage probability and the achievable rate per user. 
		% , or we assume a collocated massive MIMO architecture.
		As far as the authors are aware, these are the only analytical results in the literature concerning the coverage and the rate that account for APs PPP distributed in CF massive MIMO systems. 
		Especially, it is the first work in the literature providing the coverage probability in CF massive MIMO systems. 
		For the sake of comparison, we also present the results corresponding to SCs, where each user is associated with its nearest AP.
		% \item We investigate and elaborate on the effect of the APs spatial randomness on the downlink coverage probability and achievable user rate. Actually, without being vague regarding this randomness, we consider that the APs are PPP distributed. As a basis for comparison, we consider SCs and observe the outperformance of CF massive MIMO systems, which, in fact, grows with increasing the density of APs. More specifically, we result in the following observations.
		\begin{itemize}
			\item 	We shed light on the coverage probability. 
				We observe that CF massive MIMO systems outperform the comparable SCs architecture. Specifically, the coverage decreases with the target SINR due to increasing interference. Also, as the density of APs increases, the coverage probability increases and saturates at large density.
			%The coverage in SCs is lower with the improvement being higher at low signal-to-noise ratios (SNR). 
			% Moreover, increasing the APs density results in less coverage because of increased intra-cell interference.
			Also, a higher threshold limits the coverage probability to lower values. Notably, in all studied conditions, CF massive MIMO systems outperform SCs mostly due to favorable propagation, channel hardening, and interference suppression.
			\item We focus on the downlink achievable user rate and delve into the effects of the duration of the training phase and the number of users. A reduction of the training duration or an increase in the number of users worsens the system performance because of more severe pilot contamination and inter-user interference. Furthermore, the interference in SCs is more dominant than in CF massive MIMO systems because the latter exploit the effect of favorable propagation. Also, a higher average number of APs brings higher diversity gain, and as a result, it increases the rate. The properties of favorable propagation, channel hardening, and interference suppression become more pronounced as the average number (density) of APs increases. Moreover, by increasing the path-loss exponent, it is depicted that the user rate decreases. Based on this observation, we can extract the conclusion that distant users hardly affect the overall rate.
		\end{itemize}
		
		% Under the proposed random allocation of the large number of APs, we perform power control to provide a uniform quality of service to all users. 
	\end{itemize}

	\subsection{Paper Outline} 
	The remainder of this paper is structured as follows. Section~\ref{System} presents the general framework, where the APs in a CF massive MIMO system are PPP distributed. In Section~\ref{estimation}, the channel estimation phase is presented. Next, Section~\ref{downlink} exposes the downlink transmission and the derivation of the SINR when the APs are randomly located. In Sections~\ref{CoverageProbability} and~\ref{AchievableSpectralEfficiency}, we obtain the coverage probability and the achievable user rate are provided by accounting for the spatial randomness. The numerical results are placed in Section~\ref{Numerical}, while Section~\ref{Conclusion} summarises the paper.
	\subsection{Notation}Vectors and matrices are denoted by boldface lower and upper case symbols. The notations $(\cdot)^\T$, $(\cdot)^\H$, and $\tr\!\left( {\cdot} \right)$ express the transpose, Hermitian transpose, and trace operators, respectively. The expectation and variance operators are denoted by $\EE\left[\cdot\right]$ and $\var\left[\cdot\right]$, respectively. The notations $\mathcal{C}^{M \times 1}$ and $\mathcal{C}^{M\times N}$ refer to complex $M$-dimensional vectors and $M\times N$ matrices, respectively. Finally, $\bb \sim \cC\cN{(\b0,\mathbf{\Sigma})}$ represents a circularly symmetric complex Gaussian vector with zero-mean and covariance matrix $\mathbf{\Sigma}$.
	
	\section{System Model}\label{System}
	\subsection{System Core and APs Arrangement}\label{ChannelModel} 
	We focus on a software-defined CF massive MIMO network, where the central processing unit (CPU), being a software-defined network (SDN) controller, manages separately the control and data planes by means of a perfect backhaul. In practice, the backhaul is not perfect, but it is subject to significant limitations~\cite{Marsch2011}. In other words, the SDN controller regulates the operations among the APs cooperating phase-coherently\footnote{Obviously, the introduction of SDN in this scenario seems to be an attractive almost mandatory solution because the coordination of the APs is a complex task missing from simpler systems such as an SCs architecture, where only the data and power control coefficients are the burden of the CPU~\cite{Ngo2017}. Otherwise, the APs coordination will be a difficult task. This argument becomes more meaningful in the CF massive MIMO setting because we consider that the average number of APs is very large.}. Specifically, we assume that a large number of APs each equipped with $N\ge 1$ antennas under a network MIMO concept serves jointly a set of $K$ single-antenna users in the same time-frequency resources\footnote{The adaptation of a scheduling algorithm allows the selection of $K$ from a large set of users. Note that the locations of the users are distributed according to some independent stationary point process~\cite{Kountouris2012}.}. In other words, the network is not partitioned into cells, and each user is served by all APs simultaneously. Actually, the AP locations $\{x_{i}\} \subset \mathbb{R}^{2}$ are generated randomly and follow a two-dimensional homogeneous PPP $\Phi_{\mathrm{AP}}$ with density $\lambda_{\mathrm{AP}}$ $\left[\mathrm{APs}/\mathrm{km}^{2} \right] $. 
	% We assume that the APs associate with their closest user, 
	% Moreover, the user locations are distributed according to another independent PPP $\Phi_{\mathrm{U}}$. 
	Moreover, let a finite-sized geographic area (Euclidean plane) $\mathcal{A}$ occupying space $S\!\left( \mathcal{A} \right)$ $\mathrm{m}^{2}$, which denotes the Lebesgue measure of the set $\mathcal{A}$. 
	In a specific realization of the PPP $\Phi_{\mathrm{AP}}$, the number of APs $M$ is a random variable obeying to the Poisson distribution with mean $ \tilde{M}=\EE\left[ M\right]$ given by
	\begin{align}
	\tilde{M} =\lambda_{\mathrm{AP}} S\!\left( \mathcal{A} \right).\label{meanValue} 
	\end{align}
	
	In this regard, in a specific network realization, the total number of antennas in $\mathcal{A}$, denoted by $\mathcal{W}=MN$, is a Poisson random variable with mean $\EE\left[ \mathcal{W}\right]= \tilde{M} N$. 
	Also, in most realizations, we assume  $ \mathcal{W}\gg K $ corresponding to a CF massive MIMO scenario. Under this condition, it is more possible that distant users
	% this condition results in a weak boundary effect despite the assumption of a finite network area because in such cases even the cell-edge users 
	can enjoy coverage by a close AP similar to the provided coverage to more central users. Moreover, thanks to Slivnyak's theorem, it is sufficient to focus on a typical user, in order to conduct the analysis and investigate the performance of the network~\cite{Chiu2013a}. Note that the typical user corresponds to a user chosen at random from amongst all users in the network \footnote{ The average network performance, met by randomly located users in the network, is equivalent to the spatially averaged network statistics followed by the typical user.}. Without loss
	of generality and for the ease of exposition, we assume that the typical user is located at the origin.

For the sake of convenience, Table~\ref{ParameterValues2} summarizes the notation used throughout the paper.
	\begin{table*}
		\caption{Parameters Values for Numerical Results }
		\begin{center}
			\begin{tabulary}{\columnwidth}{ | c | c | }\hline
					{\bf Notation} &{\bf Description}\\ \hline
					$ N $, $ M $, and $ \mathcal{W} $& Number of Antennas/AP, number of APs in a PPP realization, and total number of antennas \\ \hline
					$ K $ & Number of users\\ \hline
					$\bPhi_{\mathrm{AP}}$ and $\lambda_{\mathrm{AP}}$& Poisson point process of APs and its density\\ \hline
					$S\!\left( \mathcal{A} \right)$ & Space of an area $ \mathcal{A} $\\ \hline
					%$\Phi,\phi$ & Union of $\Phi_k$'s which is also a point process. $\phi$ denotes a given realization of tier BSs' location from $\Phi$\\ \hline
					$ \tilde{M} $ and $ \mathcal{\tilde{W}} $ & Mean number of APs and antennas \\ \hline
					$ B_{\mathrm{c}}$ and $ T_{\mathrm{c}} $ & Coherence bandwidth and time \\ \hline
					$\tau_{\mathrm{tr}}$, $\tau_{\mathrm{u}}$, and $ \tau_{\mathrm{d}} $ & Duration of uplink training, uplink and dowlink transmission phases \\ \hline
					$\bh_{mk}$, $l_{mk}$, and $\bg_{mk}$& Channel, path loss, and small-scale fading vectors between $ m $th AP and user $ k $\\ \hline
					$r_{mk}$, $ \al $& Distance between $ m $th AP and user $ k $, path loss exponent\\ \hline
					$\bpsi_{k}$&Normalized pilot sequence\\ \hline
					${\rho}_{\mathrm{tr}}$ and $ {{\rho}_{\mathrm{d}}} $	& 		Uplink training transmit power per pilot symbol and downlink transmit power\\ \hline
					$\hat{\bh}_{mk}$ and $\tilde{\bee}_{mk}$ & Estimated and error channel vectors \\ \hline
					$\bL_{k}$ and		$\bPhi_{k}$ & Covariances of ${\bh}_{k}$ and $\hat{\bh}_{k}$ \\ \hline
					$\mu$ & Normalization precoding parameter \\ \hline
					$ \gamma_{k}$ and $ \bar{\gamma}_{k} $ & Statistical and DE SINR of user $ k $ \\ \hline
					$ P_{\mathrm{c}}^{\mathrm{cf}} $ & Coverage probability \\ \hline
					$ {R}_{k}^{\mathrm{cf}} $ and $ \check{R}_{k}^{\mathrm{cf}} $ & Achievable rate and its lower bound \\ \hline
				\end{tabulary}\label{ParameterValues2}
		\end{center}
	\end{table*}
	\subsection{Channel Model}\label{ChannelModel} 
	In our analysis, we consider both small-scale fading and independent large-scale fading in terms of path-loss. The independence relies on the fact that the latter stays static for several coherence intervals, while the former changes faster contingent on the user mobility, i.e., it is assumed static for one coherence interval, but it changes from one interval to the next. As a typical example, the large-scale fading should stay
	constant for a duration of at least 40 coherence intervals~\cite{Rappaport1996}. Note that each coherence interval with coherence time $\tau_\mathrm{c}=B_{\mathrm{c}}T_{\mathrm{c}}$ samples (channel uses) incorporates three phases, where $B_{\mathrm{c}}$ in $\mathrm{Hz}$ and $T_{\mathrm{c}}$ in $\mathrm{s}$ denote the coherence bandwidth and time, respectively. Specifically, we include the uplink training phase of $\tau_{\mathrm{tr}}$ symbols as well as the uplink and downlink data transmission phases of $\tau_{\mathrm{u}}$ and $\tau_{\mathrm{d}}$ samples, respectively\footnote{In conventional massive MIMO systems and in CF massive MIMO systems, a downlink training phase does not take place because the users take into advantage of the channel hardening and need only the average effective channel gain instead of the actual effective gain~\cite{Ngo2017}. However,~\cite{Chen2018} showed that CF massive MIMO systems do not always experience channel hardening except certain conditions such as small path-loss exponent and relatively large distance among users. Herein, we assume that the required conditions for channel hardening and favorable propagation are met.}. The two data transmission phases assume identical channels based on the property of channel reciprocity being achievable under TDD operation and calibration of the hardware chains. In this work, we focus on the uplink training and downlink data transmission phases.

	We consider a specific realization of the PPP $\Phi_{\mathrm{AP}}$, where the number of the APs is $M$. Let $\bh_{mk}$ be the $N\times 1$ channel vector between the $m$th AP and the typical user denoted henceforth by the arbitrary index $k$. In particular, the channel vector $\bh_{mk}$ is expressed as
	\begin{align}
	\bh_{mk}= l_{mk}^{1/2}\bg_{mk},
	\end{align}
	where $l_{mk}$ and $\bg_{mk}$ with $m=1,\ldots,M$ and $k=1,\ldots,K$ represent the independent large-scale and small-scale fadings between the $m$th AP and the typical user. 
	Specifically, the large-scale fading considers geometric attenuation (path-loss) by means of  $l_{mk}\left( r_{mk} \right)=\min\left( 1, r_{mk}^{-\al}\right)$ being a non-singular bounded 
		pathloss model with $\al>0$ being the path-loss exponent while $r_{mk}$ expresses the distance between the $m$th AP and the typical user~\cite{haenggi2009interference}. 
		% In fact, we denote $l_{mk}=\omega^{-1}r_{mk}$, which is a non-singular practical path-loss model, being distance-dependent and bounded. The parameter $\omega$ expresses the path-loss at a reference distance of $1$ $\mathrm{km}$. 
		Note that an unbounded path-loss model such as $l_{mk}\left( r_{mk} \right)= r_{mk}^{-\al}$ is not appropriate in the case of CF massive MIMO systems, where an AP can approach arbitrarily close to a user, resulting in unrealistically high power gain~\cite{Chen2018}.
	% Overall, the large-scale fading is $w_{mk}=\beta_{mk}l\left( r_{mk} \right)$
	% $l_{mk}=\beta_{mk} x_{mk}$, where $ x_{mk}=\min \left( 1, r^{-\al}_{mk} \right)$ with , while $r_{mk}$ expresses the distance between the AP and the typical user~\cite{Interference in Large Wireless Networksn haengi}. The path-loss is bounded because, otherwise, the power gain between an AP and the typical user can be infinite in theory, and insensible in practice. 
	Especially, regarding the distance $r_{mk}$ from the serving APs to the typical UE, which actually involves the communication between a random AP and a random user, we assume that it follows the uniform distribution in $\mathcal{A}$. Similarly, the distances from other users are independent and follow the uniform distribution. Furthermore, $\bg_{mk}$, modeling Rayleigh fading, consists of small-scale fading elements, which are assumed to be independent and identically distributed (i.i.d.) $\mathcal{CN}\left( 0,1 \right)$ random variables since, in practice, the groups of scatterers between each AP and each user, distributed in a wide area, are different. Given that both line and non-line of sight signals may appear in CF massive MIMO systems, the application of other fading models could be considered in future works with techniques found in~\cite{ ElSawy2013}.

	\section{Uplink Channel Estimation}\label{estimation}
	Given that the promised multiplexing gains of broadcast channels demand the knowledge of CSIT, an uplink training phase is necessary to allow the APs to compute the estimates $\hat{\bg}_{mk}$ of their local channels. Nevertheless, the re-use of pilot sequences emerges an effect known as pilot contamination, which is more prominent for massive MIMO than in conventional MIMO systems~\cite{Marzetta2010}.
	
	For this reason, in each realization of the network, there is an uplink training phase, where all $K$ users send simultaneously non-orthogonal pilot sequences with duration equal to $\tau_{\mathrm{\tr}}<K $ samples due to the limited length of the coherence interval. Note that the subscript $\mathrm{tr}$ denotes the training stage.
	%Herein, in order to avoid pilot contamination, we have assumed that in each coherence block, each user will be assigned randomly one pilot sequence from a set of orthogonal pilot sequences. 
	By denoting $\bpsi_{k}\in \mathbb{C}^{\tau_{ \mathrm{tr}} \times 1} $ the normalized sequence of the $k$th user with $\|\bpsi_{k}\|^{2}=1$, the $N \times \tau_{ \mathrm{tr} } $  received  channel by the $m$th AP is given by
	\begin{align}
	\!\!\!\tilde{\by}_{{\mathrm{tr}},m}&
	\!= \! \sum_{i=1}^{K}\sqrt{\tau_{\mathrm{tr}} \rho_{\mathrm{tr}}}l_{mi}^{1/2} \bg_{mi}\bpsi_{i}^{\H}\!+\!\bn_{{\mathrm{tr}},m},\label{eq:Ypt2}
	\end{align}
	where $\bn_{{\mathrm{tr}},m}$ is the $N \times \mathrm{tr}$ additive noise vector at the $m$th AP consisting of i.i.d. $\mathcal{CN}\left( 0,1 \right)$ random elements, and $\rho_{\mathrm{tr}}$ is the normalized signal-to-noise ratio (SNR). By assuming orthogonality among the pilot sequences, the $m$th AP estimates the channel by projecting $\tilde{\by}_{{\mathrm{tr}},mk}$ onto $\frac{1}{\sqrt{\tau_{\mathrm{tr}} \rho_{\mathrm{tr}}}}\bpsi_{k}$, i.e., we have
	\begin{align}
	\tilde{\by}_{mk}&=\frac{1}{\sqrt{\tau_{\mathrm{tr}} \rho_{\mathrm{tr}}}}\tilde{\by}_{\mathrm{tr},m}\bpsi_{k}\label{eq:Ypt29}\\
	&= \bg_{mk}l_{mk}^{1/2}\!+\! \sum_{i\ne k}^{K}l_{mi}^{1/2} \bg_{mi}\bpsi_{i}^{\H}\bpsi_{k}\!+\!\frac{1}{\sqrt{\tau_{\mathrm{tr}} \rho_{\mathrm{tr}}}}\bn_{\mathrm{tr},m}\bpsi_{k},\label{eq:Ypt3}
	\end{align}
	% In other words, we have assumed that the users use a statistical channel inversion power-control policy as in~\cite{Bjoernson2016a}, in order to to provide the same average effective channel gain (user experience) for all users.
	% $ p^{\mathrm{tr}}_{mi}=\tau_{\mathrm{tr}} \rho_{\mathrm{mi}} $ with 
	% being the normalized signal-to-noise ratio (SNR) of
	% each pilot symbol, 
	where the summation in the second term corresponds to the multi-user interference. Actually, this term is the source of pilot contamination.
	% Note that $\rho_{0}$ is a design parameter. The statistical channel inversion power control policy, described by $\rho_{\mathrm{mi}}$ results in a uniform average SNR at the $m$th AP. 
	Assuming, that the distance $r_{mk}$ is known a priori, the $m$th AP obtains the linear minimum mean-squared error (MMSE) estimate according to \cite{Verdu1998} as
	\begin{align}
	\hat{\bh}_{mk}\!&=\!{\mathrm{E}\!\left[\bh_{mk}^{\H}\tilde{\by}_{\mathrm{tr},mk} \right]}\left(\mathrm{E}\!\left[\tilde{\by}_{\mathrm{tr},mk}\tilde{\by}_{\mathrm{tr},mk}^{\H}\right]\right)^{-1}\tilde{\by}_{mk}\nn\\
	&=\frac{{l_{mk}}}{\sum_{i=1}^{K}|\bpsi_{i}^{\H}\bpsi_{k}|^{2}l_{mi}+\frac{1}{{{{\tau_{\mathrm{tr}} \rho_{\mathrm{tr}}}}}}}\tilde{\by}_{mk}.\label{estimatedChannel1} 
	\end{align}
	% where we have considered that the square of $g_{mk}$ is exponenentially distributed with rate equal to $1 $. Hence, regarding its mean value we have $\EE\!\left[|g_{mk}|^{2}\right]=1$.
	
	% The orthogonality property of MMSE estimation allows decomposing the current CSIT as

	Having obtained the estimated channel vector $\hat{\bh}_{mk}$, the estimation error vector, based on the orthogonality property of MMSE estimation, is written $\tilde{\bee}_{mk}={\bh}_{mk}-\hat{\bh}_{mk}$. The estimated channel and estimation error vectors are uncorrelated and Gaussian distributed with $N$ identical elements having zero mean and variances given by
	\begin{align}
	\sigma_{mk}^{2}=\frac{l_{mk}^{2}}{ d_{m}}
	\end{align}
	and 
	\begin{align}
	\tilde{\sigma}_{mk}^{2}\!=\!l_{mk}\left( 1-\frac{l_{mk}}{ d_{m}} \right),
	\end{align}
	where $d_{m}=\left( \sum_{i=1}^{K}|\bpsi_{i}^{\H}\bpsi_{k}|^{2}l_{mi}+\frac{1}{{\tau_{\mathrm{tr}} \rho_{\mathrm{tr}}}} \right)$. Hence, we have $\bh_{mk}\in\mathbb{C}^{{N}\times 1}\sim\mathcal{CN}\left( \b0, l_{mk}\Id_{N}\right)$, $\hat{\bh}_{mk}\in\mathbb{C}^{{N}\times 1}\sim\mathcal{CN}\left( \b0, \sigma_{mk}^{2}\Id_{N}\right)$ and $\tilde{\bee}_{k}\in\mathbb{C}^{{N}\times 1}\sim\mathcal{CN}\left( \b0,\tilde{\sigma}_{mk}^{2}\Id_{N}\right)$. 
	At this point, it is better for the sake of following algebraic manipulations to denote the vectors $\bh_{k}= [\bh_{1k}^T \cdots \bh_{Mk}^T]^{\T}\in\mathbb{C}^{\mathcal{W}\times 1}\sim\mathcal{CN}\left( \b0, \bL_{k}\right)$, $\hat{\bh}_{k}= [\hat{\bh}_{1k}^T \cdots \hat{\bh}_{Mk}^T]^{\T}\in\mathbb{C}^{\mathcal{W}\times 1}\sim\mathcal{CN}\left( \b0, \bPhi_{k}\right)$ and $\tilde{\bee}_{k}\in\mathbb{C}^{\mathcal{W}\times 1}\sim\mathcal{CN}\left( \b0,\bL_{k}-\bPhi_{k}\right)$, where the matrices $\bL_{k}$, $\bPhi_{k}= \bL_{k}^{2}\bD^{-1}$, and $\bD$ are $\mathcal{W}\times \mathcal{W}$ are block diagonal, i.e., $\bL_{k}=\mathrm{diag}\left(l_{1k}\Id_{N}, \ldots, l_{Mk}\Id_{N}\right)$, $ \bPhi_{k}=\mathrm{diag}\left(\sigma_{1k}^{2}\Id_{N},\ldots, \sigma_{Mk}^{2}\Id_{N}\right)$, and $\bD=\mathrm{diag}\left(d_{1}\Id_{N}\ldots,d_{M}\Id_{N}\right)$, respectively. In addition, we denote $\bC_{k}=\bPhi_{k}^{-1}$ with $ \bC_{k}=\mathrm{diag}\left(c_{1k}\Id_{N},\ldots,c_{Mk}\Id_{N}\right)$, where $c_{mk}=\sigma_{mk}^{-2}$.

	\section{Downlink Transmission}\label{downlink} 
	This section elaborates on the modeling and characterization of the downlink transmission in one realization of the network, and aims at presenting the downlink SINR, when the APs are PPP distributed and apply conjugate beamforming while the system is impaired by pilot contamination. Having in mind that the users are jointly served by the coordinated APs, we highlight that the received signal by the typical user is given by
	\begin{align}
	y_{\mathrm{d},k}&=\sqrt{\rho_{\mathrm{d}}}\sum_{i\in \Phi_{\mathrm{AP}}} \tilde{\bh}_{i}^{\H} \bs_{i}+z_{\mathrm{d},k}\label{signal} ,
	\end{align}
	where ${\rho_{\mathrm{d}}}$ is the downlink transmit power, $ \tilde{\bh}_{i}$ is the $N \times 1 $ channel vector between the associated AP located at $x_{i}\in \mathbb{R}^{2}$ and the typical user including large and small-scale fadings, $z_{\mathrm{d},k}\sim \mathcal{CN}\left( 0,1 \right)$ is the additive Gaussian noise at the $k$th user, and $\bs_{i}$ denotes the transmitted signal from the $i$th AP. 

	Given that the number of PPP distributed APs in the area $\mathcal{A}$ is $M$, we can rewrite~\eqref{signal} as
	\begin{align}
	y_{\mathrm{d},k}=\sqrt{\rho_{\mathrm{d}}}\sum_{m=1}^{M}\bh_{mk}^{\H} \bs_{m}+z_{\mathrm{d},k},\label{signal1}
	\end{align}
	where $\bh_{mk}$ is the channel between the $m$th AP and user $k$ while $s_{m}$ denotes the transmitted signal from the $m$th associated AP. The transmit signal is written as
	\begin{align}
	\bs_{m}=\sqrt{ \mu } \sum_{k=1}^{K} \bff_{mk} q_{k}\label{ZF}
	\end{align}
	with $q_{k} \in\mathcal{C}$ being the transmit data symbol for the typical user satisfying $\EE\left[ |q_{k}|^{2}\right]=1 $. Actually, the overall transmit signal to users can be written in a vector notation as $\bq = \big[q_{1},\dots,~q_{K}\big]^\T \in \mathbb{C}^{K}\sim \mathcal{CN}(\b0,\bI_{K})$ for all users. Moreover, ${\bff}_{mk}$ represents the $\left( m,k \right)$th element of a linear precoder. In order to avoid sharing channel state information between the APs, we assume scaled conjugate beamforming. We select conjugate beamforming precoding because of its computational efficiency and good performance in both massive MIMO and SCs designs~\cite{Larsson2014,Ngo2017}.  Thus, the expression of the precoder is ${\bff}_{mk}=c_{mk}\hat{\bh}_{mk}$. Regarding the scaling, it relies on a  statistical channel inversion power-control policy % and without loss of perfromance since the parameter $\mu$ is changed accordingly, we choose for the precoder a scaled version of the channel estimate $\hat{\bh}_{mk}$. 
	that also eases the algebraic manipulations henceforth~\cite{Bjoernson2016}. Also, $\mu$ is a normalization parameter obtained by means of the constraint of the transmit power $\mathbb{E}\left[{ \rho_{\mathrm{d}}}{}\bs\bs^{\H}\right]=\rho_{\mathrm{d}}$. Hence, we have
	\begin{align}
	\mu=\frac{1}{\EE \left[ \tr\bF_{m}\bF_{m}^\H\right]}, \label{eq:lamda} 
	\end{align}
	where $\bF_{m}=\left[\bff_{m1} \cdots \bff_{mK} \right] \in \mathbb{C}^{N \times K}$ is the precoding matrix.

	Taking into account for the imperfect CSIT due to pilot contamination (see~\eqref{eq:Ypt3}), the received signal by the typical user, given by~\eqref{signal1}, is written as
	% \begin{align}
	% y_{\mathrm{d},k}&=\sum_{m=1}^{M}\sum_{i=1}^{K}\sqrt{c_{mi} p_{mi}}{h}_{mk} \\hat{h}_{mi}^{*}q_{i}+z_{\mathrm{d},k}\\
	% &= \EE\left[ {\bh}_{k}^{\H} \bP_{k}^{\frac{1}{2}}\bC_{k}^{\frac{1}{2}}\hat{\bh}_{k}\right]q_{k}+ {\bh}_{k}^{\H}\bP_{k}^{\frac{1}{2}}\bC_{k}^{\frac{1}{2}}\hat{\bh}_{k}q_{k}-\EE\left[ {\bh}_{k}^{\H}\bP_{k}^{\frac{1}{2}}\bC_{k}^{\frac{1}{2}}\hat{\bh}_{k}\right]q_{k}\label{filteredsignal10}\\
	% &+\sum_{i\ne k}^{K}{\bh}_{k}^{\H}\bP_{i}^{\frac{1}{2}}\bC_{i}^{\frac{1}{2}}\hat{\bh}_{i}q_{i}+ z_{\mathrm{d},k}, \label{filteredsignal}
	% \end{align}
	\begin{align}
	&\!\!\!y_{\mathrm{d},k}=\sqrt{ \mu \rho_{\mathrm{d}} } \sum_{m=1}^{M}\sum_{i=1}^{K}c_{mi}{\bh}_{mk}^{\H} \hat{\bh}_{mi}q_{i}+z_{\mathrm{d},k}\label{filteredsignal10}\\
	&\!\!\!=\sqrt{ \mu \rho_{\mathrm{d}} } \EE\left[ 
	\bh_{k}^{\H}\bC_{k}\hat{\bh}_{k}\right]q_{k}+\sqrt{ \mu \rho_{\mathrm{d}} } {\bh}_{k}^{\H}\bC_{k}\hat{\bh}_{k}q_{k}
	\nn\\
	& \!\!\!-\sqrt{ \mu \rho_{\mathrm{d}} }\EE\left[ {\bh}_{k}^{\H}\bC_{k}\hat{\bh}_{k}\right]q_{k}+\sqrt{ \mu \rho_{\mathrm{d}} }\sum_{i\ne k}^{K}{\bh}_{k}^{\H}\bC_{i}\hat{\bh}_{i}q_{i}+ z_{\mathrm{d},k}, \label{filteredsignal}
	\end{align}
	where the second and fourth terms in \eqref{filteredsignal} describe the desired signal and the multi-user interference. Note that we use similar techniques to~\cite{Medard2000}, i.e., \eqref{filteredsignal10} has been transformed to \eqref{filteredsignal} for the derivation of the SINR provided below since the users are not aware of the instantaneous CSI, but only of its statistics which can be easily acquired, especially, if they change over a long-time scale. Hence, user $k$ has knowledge of only $ \EE\left[ {\bh}_{k}^{\H} \bC_{k}\hat{\bh}_{k}\right]$. In fact, similar to the well-established bounding technique in~\cite{Medard2000}, if we consider that~\eqref{filteredsignal} represents a single-input single-output (SISO) system, the effective SINR of the downlink transmission from all the APs to the typical user under imperfect CSIT, conditioned on the distances of APs $l_{mk}$ for $m=1,\ldots, M$, is given by
	\begin{align}
	\gamma_{k}=\frac{ \Big|\EE\left[ {\bh}_{k}^{\H}\bC_{k}\hat{\bh}_{k}\right]\Big|^{2}}{ \var\left[ {\bh}_{k}^{\H}\bC_{k}\hat{\bh}_{k}\right]+\sum_{i\ne k}^{K} \EE\left[ \Big| {\bh}_{k}^{\H}\bC_{i}\hat{\bh}_{i}\Big|^{2}\right] +\frac{1}{{ \mu \rho_{\mathrm{d}} }}},\label{SINR} 
	\end{align}
	where we assume that the APs treat the unknown terms as uncorrelated additive noise. According to~\cite[Fig. 2]{Ngo2017}, the achievable rate, given by~\eqref{SINR}, provides a rigorous bound close to the achievable rate corresponding to the scenario where the users know the instantaneous channel gain.
	
	% \subsection{Deterministic SINR}
	As specified by its concept, a CF massive MIMO network comprises a very large number of APs distributed across a geographic area. Hence, relied on the theory of DE analysis which is a common mathematical tool in the large MIMO literature~\cite{Hachem2007,Couillet2011,Papazafeiropoulos2015a}, we can apply it in the proposed CF massive MIMO system, and obtain the asymptotic SINR conditioned on the distances of APs as $K,~\mathcal{W} \rightarrow \infty$, while the finite ratio ${K}/{\mathcal{W}}$ is kept constant. Actually, the definition of DEs, first met in~\cite{Hachem2007} follows.
\begin{definition}[Deterministic Equivalent~\cite{Hachem2007}]
			The deterministic equivalent of a sequence of random complex values $ \left(X_{n}\right)_{n\ge1} $ is a deterministic 	sequence $ \left(\bar{X}_{n}\right)_{n\ge1} $, which approximates $ X_{n} $ such that
			\begin{align}
			X_{n}-\bar{X}_{n} \xrightarrow[ n \rightarrow \infty]{\mbox{a.s.}}0,
			\end{align}
		\end{definition}
		where $ \xrightarrow[ n \rightarrow \infty]{\mbox{a.s.}}0 $ is taken to mean almost sure convergence.
	
	As far as the authors are aware, the DE analysis is applied for the first time in the area of CF massive MIMO. Remarkably, the literature and the simulations in Section~\ref{Numerical} exhibit that the proposed result is of high practical value because of two reasons. First, the result is tight even for conventional system dimensions, i.e., when $20$ APs serve $10$ users. The second reason lies in the fact that a statistical description of the SINR is intractable because of i) the different path-losses from the different APs constituting the desired signal, ii) the 
	cross-products of the path-losses from the different interferers in the denominator can be correlated with the numerator because they contain common path-loss terms.
	
	Conditioned on the distances of APs, the deterministic SINR $\bar{\gamma}_{k}$, obtained such that $\gamma_{k}-\bar{\gamma}_{k}\xrightarrow[ M \rightarrow \infty]{\mbox{a.s.}}0$, is provided below. \begin{proposition}\label{PropDetSINR} 
		Given a realization of $\Phi_{\mathrm{AP}}$ and conditioned on the APs distances, the deterministic SINR of the downlink transmission from the PPP distributed APs to the typical user in a CF massive MIMO system, accounting for pilot contamination and conjugate beamforming, is given by
		% \footnote{The expectation operator in the denominator is over the distances from the APs to the other users.}
		\begin{align}
		\bar{\gamma}_{k}\asymp\frac{\mathcal{W}}{ \frac{{1}}{\mathcal{W}} \sum_{i=1}^{K}\tr \bD \bL_{i}^{-2}\left( \bL_{k}+\frac{\mathcal{W}}{\rho_\mathrm{d}} \right)-1}.\label{DESINR} 
		\end{align}
	\end{proposition}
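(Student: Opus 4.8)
The plan is to evaluate, in closed form, the numerator and the three denominator terms of the conditional SINR in~\eqref{SINR} by exploiting the joint Gaussianity of $\bh_{k}$ and $\hat{\bh}_{k}$ together with the MMSE orthogonality decomposition $\bh_{k}=\hat{\bh}_{k}+\tilde{\bee}_{k}$, in which $\hat{\bh}_{k}$ and $\tilde{\bee}_{k}$ are independent, and then to invoke the deterministic-equivalent machinery for the almost-sure statement. The workhorse identity throughout is $\bC_{k}\bPhi_{k}=\bI_{\mathcal{W}}$, immediate from $\bC_{k}=\bPhi_{k}^{-1}$ and the block-diagonal structure. First I would dispatch the numerator: since $\tilde{\bee}_{k}$ is zero-mean and independent of $\hat{\bh}_{k}$,
\begin{align}
\EE\left[\bh_{k}^{\H}\bC_{k}\hat{\bh}_{k}\right]=\EE\left[\hat{\bh}_{k}^{\H}\bC_{k}\hat{\bh}_{k}\right]=\tr\left(\bC_{k}\bPhi_{k}\right)=\mathcal{W},
\end{align}
so the numerator equals $\mathcal{W}^{2}$ and is already deterministic given the distances.

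For the denominator I would lean on the bilinear fourth-order identity $\EE\left[\left|\bx^{\H}\bM\by\right|^{2}\right]=\left|\tr\left(\bM\,\EE\left[\by\bx^{\H}\right]\right)\right|^{2}+\tr\left(\bM\,\bR_{\by}\bM^{\H}\bR_{\bx}\right)$, valid for jointly circularly-symmetric complex Gaussian $\bx,\by$ with marginal covariances $\bR_{\bx},\bR_{\by}$. Applied to the desired term ($\bx=\bh_{k}$, $\by=\hat{\bh}_{k}$, $\bM=\bC_{k}$, with $\EE\left[\hat{\bh}_{k}\bh_{k}^{\H}\right]=\bPhi_{k}$) it gives $\EE\left[\left|\bh_{k}^{\H}\bC_{k}\hat{\bh}_{k}\right|^{2}\right]=\mathcal{W}^{2}+\tr\left(\bC_{k}\bL_{k}\right)$, so the variance collapses to $\tr\left(\bC_{k}\bL_{k}\right)=\tr\left(\bD\bL_{k}^{-1}\right)$ after the squared mean cancels. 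The same identity applied to each interference term ($\by=\hat{\bh}_{i}$, $\bM=\bC_{i}$) yields the incoherent contribution $\tr\left(\bC_{i}\bPhi_{i}\bC_{i}\bL_{k}\right)=\tr\left(\bC_{i}\bL_{k}\right)=\tr\left(\bD\bL_{i}^{-2}\bL_{k}\right)$, which when summed with the variance consolidates self- and multi-user interference into the compact $\sum_{i=1}^{K}\tr\left(\bD\bL_{i}^{-2}\bL_{k}\right)$. Finally I would evaluate $\mu$ from the power constraint~\eqref{eq:lamda} via $\EE\left[\left\|\bff_{mk}\right\|^{2}\right]=N c_{mk}$ and $\tr\left(\bD\bL_{i}^{-2}\right)=N\sum_{m}c_{mi}$, which turns the noise floor $\tfrac{1}{\mu\rho_{\mathrm{d}}}$ into a term governed by $\tfrac{1}{\rho_{\mathrm{d}}}\sum_{i}\tr\left(\bD\bL_{i}^{-2}\right)$; consolidating this with the signal-mean and interference traces and rescaling numerator and denominator by $1/\mathcal{W}$ is what reproduces the $\tfrac{\mathcal{W}}{\rho_{\mathrm{d}}}$ weighting and the additive $-1$ in~\eqref{DESINR}, with the $\mathcal{W}$-bookkeeping in $\mu$ being the step to watch most closely.

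Two points require care, and the second is where I expect the real difficulty. First, the coherent pilot-contamination cross-term $\left|\tr\left(\bC_{i}\,\EE\left[\hat{\bh}_{i}\bh_{k}^{\H}\right]\right)\right|^{2}$ is nonzero precisely for users sharing $k$'s pilot and is a priori of order $\mathcal{W}^{2}$, i.e., comparable to the desired signal; I would argue that, under the pilot assignment and the channel-hardening regime assumed here, it is either folded into $\bD$ through the estimation variance or is asymptotically negligible against the retained incoherent traces, so that only the latter survive in~\eqref{DESINR}. Second, to make $\gamma_{k}-\bar{\gamma}_{k}\xrightarrow[M\to\infty]{\mathrm{a.s.}}0$ rigorous I must contend with the statistical coupling flagged in the text: numerator and denominator are built from the same path-losses $l_{mk}$, and the cross-products of interferer path-losses are correlated with the desired term. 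Here I would appeal to the trace lemmas and rank-one perturbation bounds of deterministic-equivalent theory~\cite{Hachem2007,Couillet2011} to show that the relevant normalized quadratic forms concentrate on their traces almost surely as the number of APs grows with $K/\mathcal{W}$ held fixed, which is exactly what upgrades the moment computation above into the claimed almost-sure equivalent.
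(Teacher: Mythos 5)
Your overall route coincides with the paper's: decompose $\bh_{k}=\hat{\bh}_{k}+\tilde{\bee}_{k}$, exploit $\bC_{k}\bPhi_{k}=\Id_{\mathcal{W}}$, and evaluate every term of~\eqref{SINR} as a trace, with your exact Gaussian fourth-moment identity playing the role that the repeated invocation of~\cite[Thm.~3.7]{Couillet2011} plays in the paper (legitimately so, since each term in~\eqref{SINR} is already an expectation conditioned on the distances). Your numerator, your variance value $\tr\bC_{k}\bL_{k}=\tr\bD\bL_{k}^{-1}$, your incoherent interference trace $\tr\bD\bL_{i}^{-2}\bL_{k}$, and your evaluation of $\mu$ all match the paper's intermediate quantities. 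However, two of your steps do not deliver~\eqref{DESINR} as stated, and both gaps are concrete.

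First, the additive $-1$. Your exact identity gives $\var\!\left[\bh_{k}^{\H}\bC_{k}\hat{\bh}_{k}\right]=\tr\bD\bL_{k}^{-1}$, which silently contains the contribution $\var\!\left[\hat{\bh}_{k}^{\H}\bC_{k}\hat{\bh}_{k}\right]=\tr\!\left(\left(\bC_{k}\bPhi_{k}\right)^{2}\right)=\mathcal{W}$. The paper instead replaces the variance by $\EE\!\left[|\tilde{\bee}_{k}^{\H}\bC_{k}\hat{\bh}_{k}|^{2}\right]=\tr\bD\bL_{k}^{-1}-\mathcal{W}$ (see~\eqref{DEvariance}--\eqref{eq:theorem4.7MRT}), discarding the $+\mathcal{W}$ as $o(\mathcal{W}^{2})$; it is precisely this residual $-\mathcal{W}$ that becomes the $-1$ in~\eqref{DESINR}. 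With your exact accounting, the assembled denominator is $\sum_{i=1}^{K}\tr\bD\bL_{i}^{-2}\bL_{k}+\frac{1}{\mu\rho_{\mathrm{d}}}$ with no $-\mathcal{W}$ anywhere, so your claim that ``consolidating with the signal-mean and interference traces'' reproduces the $-1$ is false: it cannot emerge from the $\mu$ bookkeeping. Your formula and~\eqref{DESINR} agree only asymptotically (the discrepancy is $O(1/K)$ in relative terms, vanishing in the joint limit $K,\mathcal{W}\to\infty$), so you have a valid deterministic equivalent but not the stated one, and the writeup papers over the difference. Second, the coherent pilot-contamination term. You correctly note that $\EE\!\left[\hat{\bh}_{i}\bh_{k}^{\H}\right]\neq 0$ for pilot-sharing users and that the resulting coherent term is of order $\mathcal{W}^{2}$, i.e., of signal order; but neither of your proposed escapes works. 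It is not ``folded into $\bD$'': a direct computation gives $\tr\!\left(\bC_{i}\EE\!\left[\hat{\bh}_{i}\bh_{k}^{\H}\right]\right)=N\bpsi_{k}^{\H}\bpsi_{i}\sum_{m}l_{mk}/l_{mi}=O(\mathcal{W})$, a quantity absent from the retained traces; and it is not asymptotically negligible, since its square scales exactly like the numerator. The paper reaches~\eqref{lasttermMRT} by flatly asserting that $\bh_{k}$ and $\hat{\bh}_{i}$ are mutually independent, i.e., by neglecting the estimate--channel correlation induced by non-orthogonal pilots in the interference term (while keeping pilot contamination only through $\bD$). To arrive at~\eqref{DESINR} you must adopt that same independence assumption explicitly; leaving the term open, as you do, means your argument does not terminate in the claimed expression.
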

	\begin{proof}
		See Appendix~\ref{SINRproof}.
	\end{proof}
	
	Note that~\eqref{DESINR} holds for any given realization of $\Phi_{\mathrm{AP}}$. In other words, this SINR hides the randomness regarding the AP locations, which is found at the path-losses between the APs and the users. Hence, in order to study the impact of AP density, we have to derive its expectation with respect to the distances.
		%	Hence, we cannot consider~\eqref{DESINR} to study the impact of the AP density but we can characterize it in terms of the number of APs $ M $ in a specific realization of $\Phi_{\mathrm{AP}}$. 
		Specifcally, $ M $ is found in both $ W $ and inside the trace as one could see in the element-wise expression given by~\ref{gamma1}. By taking the expectation and applying~\cite[Lemma 1]{Zhang2014a}, we have
		\begin{align}
		\bar{\gamma}_{k}\asymp\frac{ \EE\left[M\right] N}{\EE\left[ \frac{{1}}{M} \sum_{i=1}^{K}\sum_{m=1}^{M} d_{m}l_{mi}^{-2}\left( l_{mk}+\frac{MN}{\rho_{\mathrm{d}}}\right)-1
			\right]}.\label{gamma5}
		\end{align}Following a procedure as in Appendix~\ref{CoverageProbabilityproof}, we result in that $ \bar{\gamma}_{k} $ does not depend on the AP density. This property is known as SINR invariance and holds for single-slope path loss models~\cite{Andrews2016}.
	
Regarding the other primary system parameters, $ \bar{\gamma}_{k}\ $ in~\ref{DESINR} saturates with increasing the number of antennas per AP $ N $. Also, when $ \rho_\mathrm{d}\to \infty$, i.e., in the high SNR regime, the SINR reaches a ceiling. Moreover, the SINR decreases with $ K $ and with the severity of pilot contamination.

	\section{Coverage Probability}\label{CoverageProbability}
	The focal point of this section is to shed light on the coverage potentials of CF massive MIMO systems in the realistic setting where the APs are randomly located. Given that the coverage probability of such a system has not been presented before, the first task is to provide a formal definition. The next step is the presentation of the result bringing on the surface its dependence on the system parameters. The derivation, provided in Appendix~\ref{CoverageProbabilityproof}, encompasses techniques and tools from stochastic geometry. Notably, we result in the first expression in the literature that describes the coverage probability of a CF massive MIMO system, being actually the complementary cumulative distribution function (CCDF) of the SINR.
	{\begin{definition}[\!\!\cite{Dhillon2013,Papazafeiropoulos2017}]\label{def1}
			A typical user is in coverage in a CF massive MIMO system if the downlink SINR from the randomly located APs in the network is higher
			than the target SINR $T$.
	\end{definition}}
	
	\begin{theorem}\label{theoremCoverageProbability} 
		The downlink coverage probability of a pilot contaminated CF massive MIMO network, where the APs are PPP distributed and undergo a single-slope path loss model while employing conjugate beamforming, is lower bounded by~\eqref{pc1}, or equivalently~\eqref{pc5} shown at the top of next page, where $\tilde{\mathcal{W}}=\EE\left[ \mathcal{W}\right] $ and $\eta=\tilde{\mathcal{W}} \left( \tilde{\mathcal{W}}! \right)^{-\frac{1}{\tilde{\mathcal{W}}}}$.
		\begin{longequation*}[tp]
			\begin{align}
			P_{\mathrm{c}}^{\mathrm{cf}}&\ge
			\sum^{\tilde{\mathcal{W}}}_{n=1} \!\binom{\tilde{\mathcal{W}}}{n}\!\left( -1 \right)^{n+1} e^{\!\! -{n\eta {T}}\left( {\frac{ K }{\al \pi \rho_{\mathrm{d}}}}\left( \sum_{j=1}^{K} |\bpsi_{j}^{\H}\bpsi_{k}|^{2} \left( \al \rho_{\mathrm{d}}+ \mathcal{\tilde{W}}\left(\al-2\right) \right) +\frac{ \left( \al-2 \right)\rho_{\mathrm{d}} +\mathcal{\tilde{W}}\left(\al-1\right)}{{\tau_{\mathrm{tr}} \rho_{\mathrm{tr}}}} \right)-1\right) \!\!}\label{pc1} \\
			& = 1-\left(1-e^{\!\! -{\eta {T}}\left( {\frac{ K }{\al \pi \rho_{\mathrm{d}}}}\left( \sum_{j=1}^{K} |\bpsi_{j}^{\H}\bpsi_{k}|^{2} \left( \al \rho_{\mathrm{d}}+ \mathcal{\tilde{W}}\left(\al-2\right) \right) +\frac{ \left( \al-2 \right)\rho_{\mathrm{d}} +\mathcal{\tilde{W}}\left(\al-1\right)}{{\tau_{\mathrm{tr}} \rho_{\mathrm{tr}}}} \right)-1\right) \!\!}\right )^{\mathcal{\tilde{W}}}\label{pc5}.
			\end{align}
		\hrulefill
		\end{longequation*}

	\end{theorem}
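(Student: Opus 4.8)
The plan is to read the coverage probability straight off Definition~\ref{def1} as the complementary CDF of the downlink SINR, insert the deterministic-equivalent SINR of Proposition~\ref{PropDetSINR}, and then integrate out the remaining spatial randomness, namely the Poisson number of APs and the uniform AP--user distances. By Slivnyak's theorem the analysis is anchored at the typical user at the origin, so once we condition on the AP distances the only object left is $\bar\gamma_{k}\asymp\mathcal{W}/D$, with $D$ the random denominator of \eqref{DESINR} whose distance-averaged, element-wise form appears in \eqref{gamma5}. Writing $P_{\mathrm{c}}^{\mathrm{cf}}=\P[\bar\gamma_{k}>T]=\P[D<\mathcal{W}/T]$ isolates exactly the two sources of randomness that must be averaged.

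First I would model the coherently combined desired signal. Since the array gain is built from the mean total number of antennas $\tilde{\mathcal{W}}=\EE[\mathcal{W}]$ and $M$ is Poisson, I take the normalised desired power to be a Gamma variable of shape $\tilde{\mathcal{W}}$, so that the coverage event becomes the tail of a regularised lower incomplete Gamma function $P(\tilde{\mathcal{W}},z)$ evaluated at an argument proportional to $TD$. Applying Alzer's inequality, $P(\tilde{\mathcal{W}},z)\le(1-e^{-\eta z})^{\tilde{\mathcal{W}}}$ with $\eta=\tilde{\mathcal{W}}(\tilde{\mathcal{W}}!)^{-1/\tilde{\mathcal{W}}}$, converts this tail into a bound of the form $1-(1-e^{-\eta T D})^{\tilde{\mathcal{W}}}$; this is precisely where the ``$\ge$'' in the statement and the constant $\eta$ come from, and it produces \eqref{pc5}. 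The series form \eqref{pc1} then follows term by term from the binomial identity $1-(1-x)^{\tilde{\mathcal{W}}}=\sum_{n=1}^{\tilde{\mathcal{W}}}\binom{\tilde{\mathcal{W}}}{n}(-1)^{n+1}x^{n}$ applied with $x=e^{-\eta T D}$.

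The remaining and most technical step is to replace $D$ by an explicit constant through spatial averaging. Expanding $d_{m}=\sum_{j}|\bpsi_{j}^{\H}\bpsi_{k}|^{2}l_{mj}+1/(\tau_{\mathrm{tr}}\rho_{\mathrm{tr}})$ inside $D=\frac{1}{M}\sum_{i=1}^{K}\sum_{m=1}^{M}d_{m}l_{mi}^{-2}\big(l_{mk}+\tfrac{MN}{\rho_{\mathrm{d}}}\big)-1$, I would take the expectation over $\Phi_{\mathrm{AP}}$. The factor $\tfrac{1}{M}\sum_{m}$ turns the per-AP sum into a spatial average over $\mathcal{A}$, which by Campbell's theorem (equivalently, by the uniformity of the AP--user distances) reduces to radial integrals of products of the bounded path-loss $\min(1,r^{-\al})$. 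The relevant integrals yield the $1/(\al-2)$ and $1/(\al-1)$ contributions, while substituting $\tilde{\mathcal{W}}=\lambda_{\mathrm{AP}}S(\mathcal{A})N$ absorbs the area and density into $\tilde{\mathcal{W}}$ and the geometric factor $1/(\al\pi)$. Collecting the user sum over $i$ (a factor $K$) and the pilot-overlap sum over $j$ reproduces exactly the bracketed constant in the exponent of \eqref{pc1} and \eqref{pc5}.

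The hard part will be this spatial expectation. The denominator pits the inverse path losses $l_{mi}^{-2}$, which grow with distance, against the decaying direct losses $l_{mk}$ and $l_{mj}$, and it couples distinct APs and distinct users through cross-products; finiteness and convergence of the radial integrals therefore rely jointly on the bounded path-loss cap near the origin, the finite region $\mathcal{A}$, and $\al>2$ for the far-field terms, and one must track carefully which distances are independent and which are shared (the numerator and denominator share common path-loss terms, the same correlation already flagged for Proposition~\ref{PropDetSINR}). A secondary subtlety is justifying pulling the expectation through the exponential, so that the random $D$ may be replaced by its mean inside \eqref{pc1}: here I would either invoke a convexity/Jensen argument on $e^{-n\eta T D}$ or appeal to the concentration of $D$ about its mean as $\tilde{\mathcal{W}}$ grows, which is what keeps the final expression a genuine lower bound.
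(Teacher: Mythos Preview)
Your proposal follows essentially the same architecture as the paper's proof: replace the numerator $\mathcal{W}$ by a Gamma variable of shape $\tilde{\mathcal{W}}$, apply Alzer's inequality to obtain the $1-(1-e^{-\eta T D})^{\tilde{\mathcal{W}}}$ form, expand via the binomial theorem, invoke Jensen on the convex exponential to replace the random $D$ by its spatial mean, and then evaluate that mean. The order and the role of each ingredient match.

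Two places where your sketch diverges from what the paper actually does. First, the Gamma step is not a model of the coherent desired power: the paper explicitly approximates the \emph{constant} $\mathcal{W}$ by a dummy Gamma variable with mean $\mathcal{W}$ and shape $\tilde{\mathcal{W}}$, an approximation justified only asymptotically (via $\lim_{y\to\infty} y^{y}x^{y-1}e^{-yx}/\Gamma(y)=\delta(x-1)$). Second, and more substantively, the spatial expectation of $D$ is not obtained through Campbell's theorem. The $1/M$ normalisation and the cross-products $l_{mj}l_{mi}^{-2}l_{mk}$ coupling distinct users' distances prevent a direct Campbell-type reduction to a single radial integral. Instead, the paper conditions on the number $M$ of points in a ball of radius $R$, uses the conditional uniformity of the AP locations to cancel the $1/M$, and then bounds each cross-product expectation \emph{separately} by further applications of Jensen (for instance $\EE[l_{mi}^{-1}l_{mk}]\ge 1$ and $\EE[l_{mj}l_{mi}^{-2}l_{mk}]\ge 1$), combined with the one explicit radial integral $\EE[l_{mi}^{v}]=v\al\pi/(v\al-2)$. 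It is this cascade of one-variable Jensen bounds, not a multivariate Campbell computation, that produces the $(\al-2)$ and $(\al-1)$ constants in the exponent; this is precisely the ``hard part'' you flag, and the mechanism you propose for it would need to be replaced.
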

	\begin{proof}
		See Appendix~\ref{CoverageProbabilityproof}.
	\end{proof}
	% \subsection{Finite-$M$ Analysis}\label{FiniteManalysis} 
	
	Focusing on~\eqref{pc5}, we observe better the dependence of the coverage probabilty from the system parameters. In particular, we notice the decrease of $ P_{\mathrm{c}}^{\mathrm{cf}} $ with $ K $ being the number of users. Also, the more severe the pilot contamination is, the lower the coverage probability becomes. A similar behavior results by increasing the target SINR $T$. In fact, if $T \to \infty$, the coverage probability becomes zero. In addition, if the path-loss exponent $\al>2 $ increases, $ P_{\mathrm{c}}^{\mathrm{cf}} $ decreases as expected. Also, in the high SNR regime ($ \rho_{\mathrm{d}} \to \infty $), the coverage probability saturates which means that it is interference limited while when $ \rho_{\mathrm{d}} \to 0 $ it is noise limited since $ P_{\mathrm{c}}^{\mathrm{cf}}\to 0 $. Furthermore, the dependence from the AP density and the number of antennas per AP is given indirectly by means of $ \mathcal{\tilde {W}} $. However, the coverage probability is a complicated function of $ \mathcal{\tilde {W}} $ and the dependence from the corresponding parameter can be shown only by means of numerical results. In Sec.~\ref{Numerical}, it is depicted that $ P_{\mathrm{c}}^{\mathrm{cf}}$ increases with $ \lambda_{\mathrm{AP}} $ and saturates when the AP density becomes large. This saturation is appeared in SCs too~\cite{Andrews2016} in the case of single-slope path loss models. A similar behavior is observed regarding the number of antennas per AP.
	% In general, we have
	% \begin{align}
	% p_{c} \triangleq\mathbb{P}\left(\underset{{w} \in \mathcal{W}}{ \bigcupcan} \max_{x_{w}\in \Phi_{\mathrm{B}_{w}}}\mathrm{SDINR\left( q_{w},x_{w} \right)>T_{w}} \right),
	% \end{align}
	% where $q_{w}$ defines a set of parameters. Specifically, we define $q_{w}\triangleq\{K_{w},N_{w},\lambda_{\mathrm{B}_{w}},\al_{w},\delta_{w},\tau_{w}, \kappa_{\mathrm{t}_\mathrm{AP}w},$ $ \kappa_{\mathrm{r}_\mathrm{UE}w},$ $M_{\mathrm{t}w},M_{\mathrm{r}w},m_{\mathrm{t}w},m_{\mathrm{t}w}\}$. 
	\section{Achievable Rate}\label{AchievableSpectralEfficiency}
	Herein, we provide a closed-form expression of the downlink achievable rate in a CF massive MIMO system. Specifically, the following lemma allows to obtain a tractable lower bound for a large number of APs. 
	
	\begin{lemma}[\cite{Hoydis2013}]
		The downlink ergodic channel capacity of the typical user $k$ in a CF massive MIMO system with conjugate beamforming, PPP distributed AP, and a single-slope path loss model
		% for any given realization of $\Phi_{\mathrm{AP}}$ 
		is lower bounded by the average achievable rate given by
		\begin{align}
		R_{k}^{\mathrm{cf}}=\left( 1-\frac{ \tau_{\mathrm{\tr}}}{ \tau_{c}} \right)\EE\left[ \log_{2} \left( 1+ \bar{\gamma}_{k}\right)\right] ~~~~~~\mathrm{b/s/Hz},\label{Ratebar} 
		\end{align}
		where $\tau_{c}$ is the channel coherence interval in number of
		samples, $\tau_{\mathrm{\tr}}$ is the duration of the uplink training phase, and $\bar{\gamma}_{k}$ is given by~\eqref{DESINR}. 
	\end{lemma}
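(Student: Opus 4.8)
The plan is to build the bound in two stages: first establish an ergodic achievable rate for a fixed network realization via the standard worst-case uncorrelated-noise argument, and then substitute the deterministic-equivalent SINR of Proposition~\ref{PropDetSINR} and average over the AP locations.

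First I would invoke the use-and-then-forget bounding technique of~\cite{Medard2000}, already prepared in the signal decomposition~\eqref{filteredsignal}. Conditioned on a realization of $\Phi_{\mathrm{AP}}$, the typical user knows only the mean effective gain $\EE[\bh_k^\H\bC_k\hat\bh_k]$ and treats the gain fluctuation, the multi-user interference, and the thermal noise as mutually uncorrelated additive noise. Since Gaussian noise is the worst case for a fixed second moment, the mutual information of the induced SISO channel is lower bounded by $\log_2(1+\gamma_k)$ with $\gamma_k$ exactly as in~\eqref{SINR}; note that the $\EE[\cdot]$ and $\var[\cdot]$ in~\eqref{SINR} are over the small-scale fading, so $\gamma_k$ is already a deterministic function of the path-losses $l_{mk}$. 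Multiplying by the pre-log factor $(1-\tau_{\mathrm{tr}}/\tau_c)$, which discounts the $\tau_{\mathrm{tr}}$ channel uses spent on uplink training out of the $\tau_c$ samples of the coherence block, yields the per-realization spectral efficiency $(1-\tau_{\mathrm{tr}}/\tau_c)\log_2(1+\gamma_k)$.

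Next I would replace $\gamma_k$ by $\bar\gamma_k$ using Proposition~\ref{PropDetSINR}, which guarantees $\gamma_k-\bar\gamma_k\xrightarrow[\mathcal{W}\rightarrow\infty]{\mbox{a.s.}}0$. Because $x\mapsto\log_2(1+x)$ is continuous on $[0,\infty)$, the continuous mapping theorem gives $\log_2(1+\gamma_k)-\log_2(1+\bar\gamma_k)\xrightarrow[\mathcal{W}\rightarrow\infty]{\mbox{a.s.}}0$. Taking the expectation over the AP locations, equivalently over the distribution of the distances entering the $l_{mk}$, then produces the claimed expression~\eqref{Ratebar}.

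The hard part will be justifying the interchange of the limit and the expectation over $\Phi_{\mathrm{AP}}$, since almost-sure convergence of the integrand does not by itself transfer to convergence in mean. I would close this gap with dominated convergence, and here the bounded path-loss model $l_{mk}=\min(1,r_{mk}^{-\al})$ is essential: it prevents $\bC_k=\bPhi_k^{-1}$, and hence $\gamma_k$, from diverging when an AP lies arbitrarily close to the user, which furnishes an integrable dominating function and controls the relevant moments. Once domination is in place, the remaining details reduce to the uniform-integrability argument of~\cite{Hoydis2013}, which I would cite rather than reproduce.
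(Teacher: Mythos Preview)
The paper does not supply a proof for this lemma at all: it is stated with a citation to~\cite{Hoydis2013}, and the only in-text justification is the two-sentence remark that the expectation in~\eqref{Ratebar} is over the AP distances (since the small-scale fading has already been averaged out in $\bar\gamma_k$) and that the pre-log factor accounts for the pilot overhead. Your proposal is therefore not competing against a proof in the paper but reconstructing the argument behind the citation.

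That reconstruction is essentially the standard one and is sound: the use-and-then-forget bound of~\cite{Medard2000} already underlies~\eqref{SINR}, Proposition~\ref{PropDetSINR} supplies the deterministic-equivalent replacement, and the continuous-mapping step is routine. Your attention to the limit--expectation interchange is appropriate and is precisely the kind of uniform-integrability issue that~\cite{Hoydis2013} handles; the paper implicitly relies on that reference for this technicality as well. One small caveat: the replacement $\gamma_k\to\bar\gamma_k$ is an asymptotic equality, not an inequality, so strictly speaking~\eqref{Ratebar} is an \emph{asymptotic} lower bound on capacity (lower bound via UatF, then DE approximation), which is how both the paper and~\cite{Hoydis2013} use it; you may want to make that explicit rather than leaving it implicit in the word ``replace''.
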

	
	Given that the terms in $\bar{\gamma}_{k}$ are actually averaged over the small-scale fading, the expectation in the previous lemma applies to the remaining statistical variables, which are the AP distances. Regarding the pre-log factor, it concerns the pilot overhead. In order to avoid intractable lengthy numerical evaluations of the integrals with respect to the AP distances, we apply Jensen's inequality. The following proposition presents a closed-form expression for the downlink achievable $R_{k}$.
	\begin{theorem}\label{PropDetSINRDistances2} 
		A lower bound of the downlink average achievable rate per user with conjugate beamforming in a CF massive MIMO system with PPP distributed APs is expressed by
		\begin{align}
		\check{R}_{k}^{\mathrm{cf}}=\left( 1-\frac{ \tau_{\mathrm{\tr}}}{ \tau_{c}} \right) \log_{2} \left( 1+ \check{\gamma}_{k}\right) ~~~~~~\mathrm{b/s/Hz},\label{Ratebar1} 
		\end{align}
		where $\check{\gamma}_{k}$ is obtained as shown in~\eqref{DESINR1} at the top of the next page.
		\begin{longequation*}[tp]

			\begin{align}
			\check{\gamma}_{k}= \lambda_{\mathrm{AP}}N \left( {\frac{ K }{\al \pi \rho_{\mathrm{d}}}}\left( \sum_{j=1}^{K} |\bpsi_{j}^{\H}\bpsi_{k}|^{2} \left( \al \rho_{\mathrm{d}}+ N\left (\al-2\right ) \right) +\frac{ \left( \al-2 \right)\rho_{\mathrm{d}} +N\left (\al-1\right )}{{\tau_{\mathrm{tr}} \rho_{\mathrm{tr}}}} \right)-1 \right)^{-1} .\label{DESINR1} 
			\end{align}
			\hrule
		\end{longequation*}
		
	\end{theorem}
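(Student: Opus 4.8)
The plan is to start from the capacity lower bound of the preceding lemma, i.e.\ \eqref{Ratebar}, in which the expectation is only over the spatial configuration of the APs (the small-scale fading has already been averaged out inside $\bar{\gamma}_{k}$), and to collapse the expectation of the logarithm into the logarithm of a single deterministic SINR by Jensen's inequality. The pre-log factor $1-\tau_{\mathrm{tr}}/\tau_{c}$ is untouched by the expectation and carries through verbatim, so the whole task reduces to lower bounding $\EE\!\left[\log_{2}(1+\bar{\gamma}_{k})\right]$. The useful structural fact is that $\bar{\gamma}_{k}$ in \eqref{DESINR} has the form $\mathcal{W}/(X-1)$, where $X=\frac{1}{\mathcal{W}}\sum_{i}\tr\bD\bL_{i}^{-2}(\bL_{k}+\mathcal{W}/\rho_{\mathrm{d}})$, so that $1/\bar{\gamma}_{k}$ is, up to the Poisson factor $\mathcal{W}=MN$, an affine functional of the path-loss terms. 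Since $t\mapsto\log_{2}(1+1/t)$ is convex on $t>0$, Jensen's inequality gives $\EE\!\left[\log_{2}(1+\bar{\gamma}_{k})\right]\ge\log_{2}\!\left(1+1/\EE[1/\bar{\gamma}_{k}]\right)$, a genuine lower bound that identifies $\check{\gamma}_{k}=\left(\EE[1/\bar{\gamma}_{k}]\right)^{-1}$; decoupling the numerator $\EE[M]N$ from the denominator via the ratio approximation of \cite[Lemma 1]{Zhang2014a} recovers precisely the averaged form \eqref{gamma5}.

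Next I would evaluate the spatial expectation in the denominator of \eqref{gamma5}. Because $\bD,\bL_{i},\bL_{k}$ are block diagonal, the trace splits into the per-AP sum $\tr\bD\bL_{i}^{-2}(\bL_{k}+\mathcal{W}/\rho_{\mathrm{d}})=N\sum_{m}d_{m}l_{mi}^{-2}(l_{mk}+MN/\rho_{\mathrm{d}})$, and $d_{m}=\sum_{j}|\bpsi_{j}^{\H}\bpsi_{k}|^{2}l_{mj}+\frac{1}{\tau_{\mathrm{tr}}\rho_{\mathrm{tr}}}$ is itself a sum of path losses, so the summand is a double sum over the user indices $i$ and $j$. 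Conditioning on the number of APs $M$, Slivnyak's theorem together with the defining property of the PPP makes the AP positions i.i.d.\ uniform over $\mathcal{A}$, whence the distances $\{r_{mi}\}$ are independent across both APs $m$ and users $i$. The contribution carrying the explicit factor $M$ (the noise-like $MN/\rho_{\mathrm{d}}$ part) is then handled by Campbell's theorem, $\EE[\sum_{m}f(x_{m})]=\lambda_{\mathrm{AP}}\int_{\mathcal{A}}f\,dx$, while for the $\frac{1}{M}$-averaged part the per-AP terms are identically distributed and the $M$-dependence cancels. I would expand the products, separating the diagonal cases $i=k$ and $j\in\{i,k\}$ from the fully independent cases, so that every summand collapses to a product of one-dimensional radial moments of the bounded path loss $l(r)=\min(1,r^{-\al})$.

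These radial moments are where the shape of the answer is born: for $r$ uniform in $\mathcal{A}$ and $\al>2$ the convergent integrals $\EE[l]$ and $\EE[l^{2}]$ evaluate to multiples of $1/S\!\left(\mathcal{A}\right)$ with the characteristic factors $\tfrac{\al\pi}{\al-2}$ and $\tfrac{\al\pi}{\al-1}$, which are exactly the $(\al-2)$, $(\al-1)$ coefficients and the $\pi$ displayed in \eqref{DESINR1}; the non-singular cutoff at $r=1$ is what keeps these integrals finite. Collecting terms, the factor $\tilde{M}=\lambda_{\mathrm{AP}}S\!\left(\mathcal{A}\right)$ brought in by the numerator $\EE[M]N$ cancels the $1/S\!\left(\mathcal{A}\right)$ carried by the radial moments in the denominator, leaving a dependence only on the density $\lambda_{\mathrm{AP}}$ and the per-AP antenna count $N$. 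This is the same spatial averaging already invoked for the coverage probability (Appendix~\ref{CoverageProbabilityproof}), and substituting the resulting $\check{\gamma}_{k}$ into \eqref{Ratebar1} yields the claimed closed form.

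The main obstacle is the bookkeeping of this spatial expectation rather than any single sharp inequality. The product $d_{m}l_{mi}^{-2}(l_{mk}+MN/\rho_{\mathrm{d}})$ spawns a double sum over $i$ and $j$ whose terms must be sorted according to which of $i$, $j$, $k$ coincide, each coincidence turning a product of independent moments into a higher-order moment; and one must verify that the individually delicate near-field and far-field path-loss contributions organise themselves so that the area $S\!\left(\mathcal{A}\right)$ cancels cleanly against $\tilde{M}$ and only the convergent $\EE[l]$, $\EE[l^{2}]$ moments survive in the final expression. Tracking this cancellation—and checking that the hypothesis $\al>2$ is exactly what the surviving radial integrals require for convergence—is the delicate step; the Jensen bound and the pre-log factor are routine by comparison.
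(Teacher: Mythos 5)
Your route is, in outline, the paper's route: Jensen's inequality applied to the convex map $t\mapsto\log_{2}\left(1+1/t\right)$ so that $\check{\gamma}_{k}=\left(\EE\left[\bar{\gamma}_{k}^{-1}\right]\right)^{-1}$, the ratio approximation of \cite[Lemma 1]{Zhang2014a} to decouple $\EE\left[M\right]N$ from the denominator, conditioning on a ball $B\left(o,R\right)$ and on $M$ so that the AP positions become i.i.d.\ uniform, and an index-coincidence case analysis reducing everything to one-dimensional radial integrals --- exactly the structure of Appendix~\ref{SINRproofDistances2} leaning on the computations of Appendix~\ref{CoverageProbabilityproof}. The genuine gap is in your treatment of those integrals. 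You claim that after expansion ``every summand collapses to a product of one-dimensional radial moments'' and that the answer follows by \emph{evaluating} the convergent moments $\EE\left[l\right]$ and $\EE\left[l^{2}\right]$. But the expanded summands contain \emph{inverse} path-loss moments, e.g.\ $\EE\left[l_{mi}^{-1}\right]$ and $\EE\left[l_{mi}^{-2}\right]$, and these diverge as the window grows: with $l\left(r\right)=\min\left(1,r^{-\al}\right)$ and $r$ uniform on a disc of radius $R$ one gets $\EE\left[l_{mi}^{-1}\right]\sim\frac{2R^{\al}}{\al+2}\to\infty$, and even the fully independent mixed term $\EE\left[l_{mi}^{-1}\right]\EE\left[l_{mk}\right]\sim R^{\al-2}\to\infty$ for $\al>2$. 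No bookkeeping of the cancellation against $\tilde{M}=\lambda_{\mathrm{AP}}S\!\left(\mathcal{A}\right)$ can rescue this, because the divergence sits in the exponent of $R$, not in the area normalization; the quantity $\EE\left[\bar{\gamma}_{k}^{-1}\right]$ itself is not finite in the infinite-plane limit.

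What the paper actually does at this point --- and what is missing from your plan --- is to insert a second layer of Jensen-type bounds term by term: $\EE\left[l_{mi}^{-v}\right]\ge 1/\EE\left[l_{mi}^{v}\right]$ as in \eqref{secondpart}--\eqref{eq6}, together with $\EE\left[l_{mi}^{-2}\right]\ge\EE\left[l_{mi}^{-1}\right]^{2}$ and $\EE\left[l_{mi}^{-1}l_{mk}\right]\ge\EE\left[l_{mk}\right]/\EE\left[l_{mi}\right]=1$ as in \eqref{firstPart0}--\eqref{thirdPart3}. It is these reciprocal bounds, not evaluations, that generate the coefficients $\left(\al-2\right)/\left(\al\pi\right)$ and $\left(\al-1\right)/\left(\al\pi\right)$ appearing in \eqref{DESINR1}; the positive moments $\EE\left[l\right]$ and $\EE\left[l^{2}\right]$ enter only through their reciprocals. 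So the condition $\al>2$ is not ``exactly what the surviving integrals require'': what survives is chosen by inequality, not by convergence. A side effect you would also inherit: these per-term bounds lower-bound $\EE\left[\bar{\gamma}_{k}^{-1}\right]$, which pushes $\check{\gamma}_{k}$ \emph{upward}, so the final closed form trades the clean lower-bound chain for tractability; this subtlety is present in the paper as well, but your proposal, by presenting the step as an exact evaluation, obscures rather than resolves it.
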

	\begin{proof}
		See Appendix~\ref{SINRproofDistances2}.
	\end{proof}
	
	Basically, the impact of the system parameters on the achievable rate is shown by means of $ \check{\gamma}_{k} $. Hence, given that $ \check{\gamma}_{k} $ decreases with the number of users $ K $ and pilot contamination as can be seen by~\eqref{DESINR1}, the corresponding rate decreases as well. Moreover, the achievable rate increases with the number of antennas per AP $ N $, the AP density $ \lambda_{\mathrm{AP}}$, and the transmit power. However, the rate saturates when the number of antennas per AP $ N $ becomes large as expected. In addition, we notice a ceiling at the rate at high $ \rho_{\mathrm{d}} $ but it keeps increasing with the AP density. A similar behavior regarding the AP density is met in small cell systems~\cite{Andrews2016} when a single-slope path loss model is considered. Moreover, the rate decreases with the path-loss exponent $ \al $.

	\section{Numerical Results}\label{Numerical} 
	In this section, we illustrate and discuss the behavior of PPP located APs in a CF architecture for the first time in the corresponding literature since prior works have not taken into account a realistic and well-accepted model for the randomness of APs positions in the analysis. We focus on the analytical expressions concerning the coverage probability $P_{\mathrm{c}}^{\mathrm{cf}}$ and the achievable rate $\check{R}_{k}^{\mathrm{cf}}$, which are provided by means of Theorem~\ref{theoremCoverageProbability} and Theorem~\ref{PropDetSINRDistances2}\footnote{It is worthwhile to mention that both theorems are obtained based on a single-slope path loss model but they could also be easily generalized to describe more general path-loss models such as the multi-slope path loss model used in the seminal work regarding CF massive MIMO systems~\cite{Ngo2017}. In fact, this is the topic of ongoing work by the authors, i.e., the analysis and comparison of multi-slope path loss models in CF massive MIMO systems. }. 
	\begin{table*}
		\caption{Parameters Values for Numerical Results }
		\begin{center}
			\begin{tabulary}{\columnwidth}{ | c | c | }\hline
				{\bf Description} &{\bf Values}\\ \hline
				Number users& $K=10$\\ \hline
				Number of Antennas/AP & $N=5$\\ \hline
				AP density & $\lambda_{\mathrm{AP}}=40~\mathrm{APs/km^{2}}$\\ \hline
				%$\Phi,\phi$ & Union of $\Phi_k$'s which is also a point process. $\phi$ denotes a given realization of tier BSs' location from $\Phi$\\ \hline
				Communication bandwidth, carrier frequency & $W_{\mathrm{c}} = 20~\mathrm{MHz}$, $f_{0} = 1.9~\mathrm{GHz}$\\ \hline
				Uplink training transmit power per pilot symbol & ${\rho}_{\mathrm{tr}}=100~\mathrm{mW}$\\ \hline
				Downlink transmit power & ${{\rho}_{\mathrm{d}}}=200~\mathrm{mW}$ \\ \hline
				Path loss exponent & $\al=3.5$ \\ \hline
				Coherence bandwidth and time & $B_{\mathrm{c}}=200~\mathrm{KHz}$ and $T_{\mathrm{c}}=1~\mathrm{ms}$
				\\ \hline
				Duration of uplink training & $\tau_{\mathrm{tr}}=10$ samples
				\\ \hline
				Duration of uplink and downlink training is SCs & $\tau_{\mathrm{tr}}=\tau_{\mathrm{d}}=10$ samples
				\\ \hline
				%$x_k$ & used for location of a $k^{th}$ BS and BS itself\\ \hline
				Boltzmann constant & $\kappa_{\mathrm{B}}=1.381\times 10^{-23}~\mathrm{J/K}$ \\ \hline
				Noise temperature & $T_{0}= 290~\mathrm{K}$ \\ \hline
				Noise figure & ${N_\mathrm{F}}=9~\mathrm{dB}$ \\ \hline
			\end{tabulary}\label{ParameterValues1} 
		\end{center}
	\end{table*}
	
	For the sake of comparison, we consider the system model in~\cite{Papazafeiropoulos2017}, where independent users are associated with their nearest multi-antenna AP, while the remaining APs act as interferers. Henceforth, we refer to this scenario as ``small cells'' or ``SCs''. Especially, we assume that the base stations in that model have the same number of antennas serving a single user, i.e., in~\cite{Papazafeiropoulos2017}, we set $M=4$ and $K=1$ while the imperfect CSIT model in that scenario is replaced by the current one. In addition, we assume no hardware impairments and channel aging. In addition, similar to~\cite{Ngo2017}, we assume that handovers among the APs do not take place. 
	
One main difference with CF massive MIMO is that, in SCs, the effective channel power does not harden while in the case of CF massive MIMO systems, the signal power tends to its mean as the number of APs becomes large~\cite{Ngo2017}. In other words, SCs need to estimate their effective channel gain. Hence, SCs require both uplink and downlink training phases while CF massive MIMO systems rely only on uplink training. During the investigation of their performance, this difference will be more obvious. Moreover, an additional advantage, met in CF massive MIMO, is favorable propagation which can achieve optimal performance with simple linear processing. For example, on the uplink, the noise and interference can be almost canceled out with a simple linear detector such as the matched filter. Another primary reason justifying the outperformance of CF massive MIMO systems against SCs is that the latter have inherent the inter-cell interference while CF systems implement co-processing and all the APs that affect a specific user take into account for its interference. As a result, the CF approach achieves to suppress inter-cell interference by eliminating any cell boundaries~\cite{Interdonato2019}.
	% Also, given that the APs in SCs have a single antenna, beamforming does not occur.
	
	A set of Monte Carlo simulations verifies the analytical expressions.
	% by averaging them over $10^3$ random instances.
	In fact, by plotting the proposed analytical expressions along with the simulated results represented by means of black bullets, we observe their coincidence\footnote{In particular, the fact that the analytical results, obtained by means of the DE analysis, coincide with the simulations means that the former can be used as tight approximations in the case of a CF massive MIMO system. Although this is a known result in the massive MIMO literature~\cite{Couillet2011,Wagner2012}, the DE analysis has not been verified before as an RMT tool for CF massive MIMO systems.}. Especially, the simulated results are generated by means of the corresponding statistical SINR given by~\eqref{SINR} by averaging over $ 10^{4} $ random instances of the channels while the coverage probability and achievable rate are obtained as an average of $ 10^{4} $ realizations of different random AP topologies. The results corresponding to CF massive MIMO systems and SCs are depicted by means of ``solid'' blue and ``dot'' red lines, respectively.
	\begin{figure}[!h]
		\begin{center}
			\includegraphics[width=0.95\linewidth]{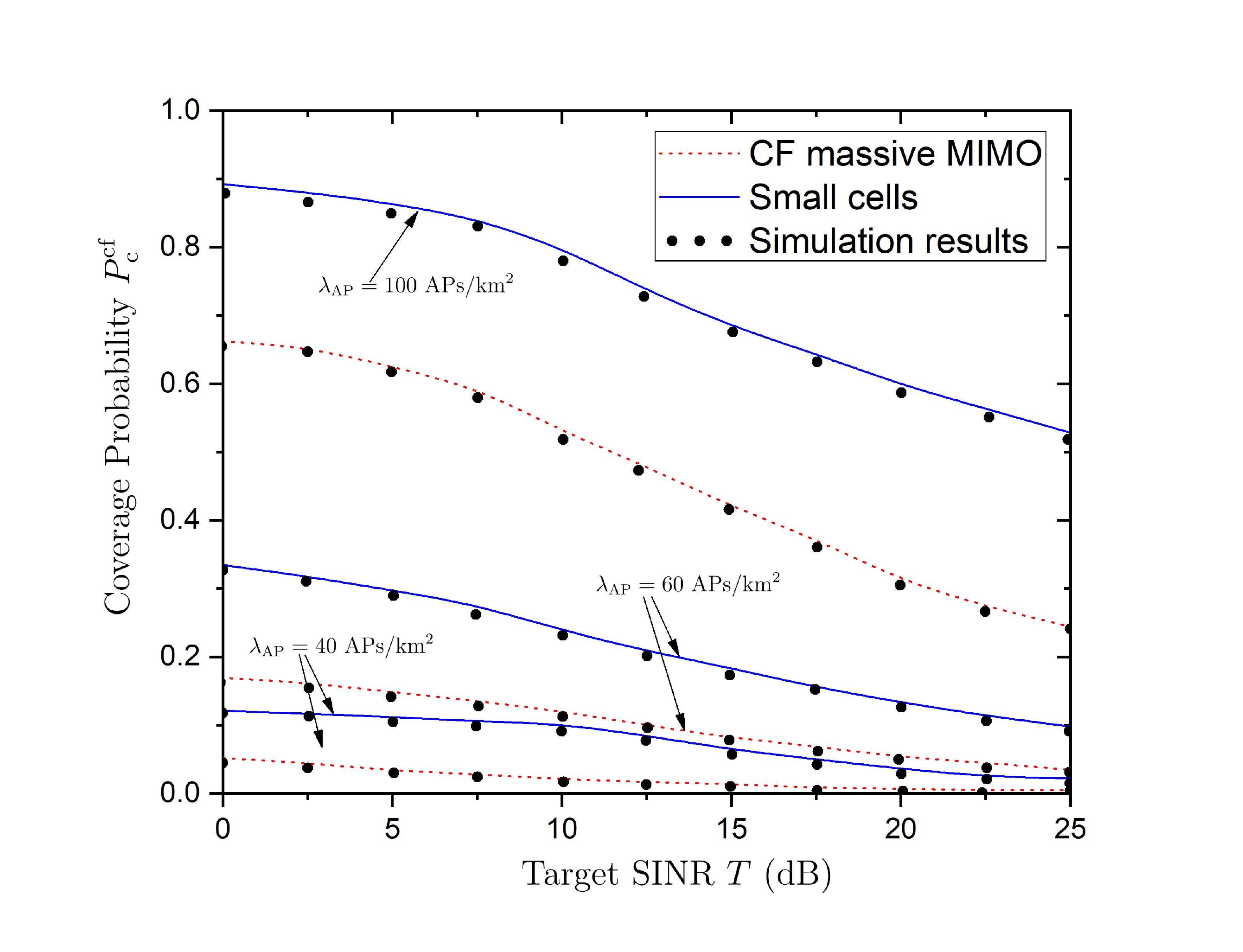}
			\caption{\footnotesize{Coverage probability for varying AP density $\lambda_{\mathrm{AP}}$ versus the target SINR $T$ for both CF massive MIMO systems and SCs.}}
			\label{Fig1}
		\end{center}
	\end{figure}
	\subsection{Setup}
	We choose a finite window of area of $1~\mathrm{km}\times 1$ $\mathrm{km}$, where we distribute the APs, each having $N=5$ antennas, according to a PPP realization with density $\lambda_{\mathrm{AP}}=40~\mathrm{APs/km^{2}}$ unless otherwise stated. Given that the analytical expressions rely on the assumption of an infinite plane while the simulation considers a finite square, we assume that this area is wrapped around at the edges to prevent any boundary effects. In addition, the structure of the system includes a number of APs serving similtaneously $K=10$ randomly distributed users. Actually, similar to~\cite{Ngo2017}, we use the default values in Table~\ref{ParameterValues1} unless otherwise stated. The normalized uplink training transmit power per pilot symbol ${\rho}_{\mathrm{tr}}$ and downlink transmit power ${{\rho}_{\mathrm{d}}}$ result by dividing $\bar{\rho}^{\mathrm{tr}}$ and ${\bar{p}_{\mathrm{d}}}$ by the noise power ${N_\mathrm{P}}$ given in $\mathrm{W}$ by ${N_\mathrm{P}} = W_{\mathrm{c}} $ $\mbox{\footnotesize{$\times$}}$ $ \kappa_{\mathrm{B}}$ $\mbox{\footnotesize{$\times$}}$ $ T_{0}$ $\mbox{\footnotesize{$\times$}}$ ${N_\mathrm{F}}$, where the various parameters are found in Table~\ref{ParameterValues1}. Also, in order to guarantee a fair comparison between CF massive MIMO systems and SCs, the total radiated power must be equal in both architectures. Hence, we have that $\bar{\rho}_{\mathrm{tr}}^{\mathrm{sc}}= \bar{\rho}_{\mathrm{tr}}$ and ${\bar{p}_{\mathrm{d}}}^{\mathrm{sc}}=\frac{M}{K}{\bar{p}_{\mathrm{d}}}$, where $\bar{\rho}_{\mathrm{tr}}^{\mathrm{sc}}$ and ${\bar{p}^{\mathrm{d}}}_{\mathrm{sc}}$ are the normalized uplink training and downlink transmit powers~\cite{Ngo2017}.
	% The large-scale fading is described in terms of lognormal shadowing given by $\beta_{k}=10^{\frac{\sigma_{\mathrm{sh}z_{k}}}{10}}$ 

	\subsection{Depictions and Discussions}
	\subsubsection{Coverage Probability}
	The coverage probability, describing the SCs setting, is denoted by $ P_{\mathrm{c}}^{\mathrm{sc}}$ and provided by~\cite[Th.~$1$]{Papazafeiropoulos2017}.
	
	In Fig.~\ref{Fig1}, we assess the performance of the proposed bound by varying the target SINR. Specifically, firstly, it is shown the tightness of the proposed bound against the SINR. It is evident that the tightness is very good, however, it is relaxed as $ \lambda_{\mathrm{AP}}$ increases. Although someone would expect that the bound would become tighter with $ \tilde{M}\sim \lambda_{\mathrm{AP}}$ due to the use of the DE analysis, this contradiction appears due to the Alzer's inequality. Next, Fig.~\ref{Fig1} depicts that the coverage probability decreases with the target SINR in both cases of CF massive MIMO and SCs because of the inter-user and inter-cell interferences, respectively. Notably, the estimation error has its own contribution. In other words, these reasons, degrading the SINR, result in less coverage as the threshold increases. Especially, when the target SINR tends to zero, the coverage probability becomes one, when $ T \to \infty $, the coverage probability approaches zero, while, in practice, for typical values of $ T $ being around $ 15~\mathrm{dB}$, $ P_{\mathrm{c}}^{\mathrm{sc}}$ is finite and decreases. It is obvious that CF massive MIMO systems, unlike SCs, systematically provide higher coverage for all values of the target SINR $ T $ because they take benefit from favorable propagation, channel hardening, and suppression of the inter-cell interference. Actually, as the AP density $\lambda_{\mathrm{AP}}$ increases, these effects contribute more to the outperformance of CF massive MIMO systems against SCs having a cellular nature. 
	%	Actually, Inspecting further Fig.~\ref{Fig1}, it is observed that the coverage probability in CF massive MIMO systems is much more centralized around its median value with comparison to the SCs design due to similar reasons.}
	% \begin{figure}[!h]
	% \begin{center}
	% \includegraphics[width=0.8\linewidth]{Fig1.pdf}
	% \caption{\footnotesize{Overall coverage probability of a MU-MIMO HetNet for varying transmission strategies versus the target SINR $\mathcal{T}$ for both noncooperative and cooperative scenarios, {where $\lambda_{m}=0.01~\mathrm{m^{-2}}$, $\lambda_{s}=0.04~\mathrm{m^{-2}}$, $\lambda_{k}=40 \lambda_s$, $p_{m}=15~\mathrm{dBW}$, $p_{s}=5~\mathrm{dBW}$, $M_{m}=\Psi_{m}=1$ and $M_{s}=\Psi_{s}=1$ (SISO), $M_{m}=8$, $\Psi_{m}=1$, $M_{s}=4$, $\Psi_{s}=1$ (SUBF), $M_{m}=\Psi_{m}=8$ and $M_{s}=\Psi_{s}=8$ (SDMA)}.}}
	% \label{Fig1}
	% \end{center}
	% \end{figure}

	\begin{figure}[!h]
		\begin{center}
			\includegraphics[width=0.95\linewidth]{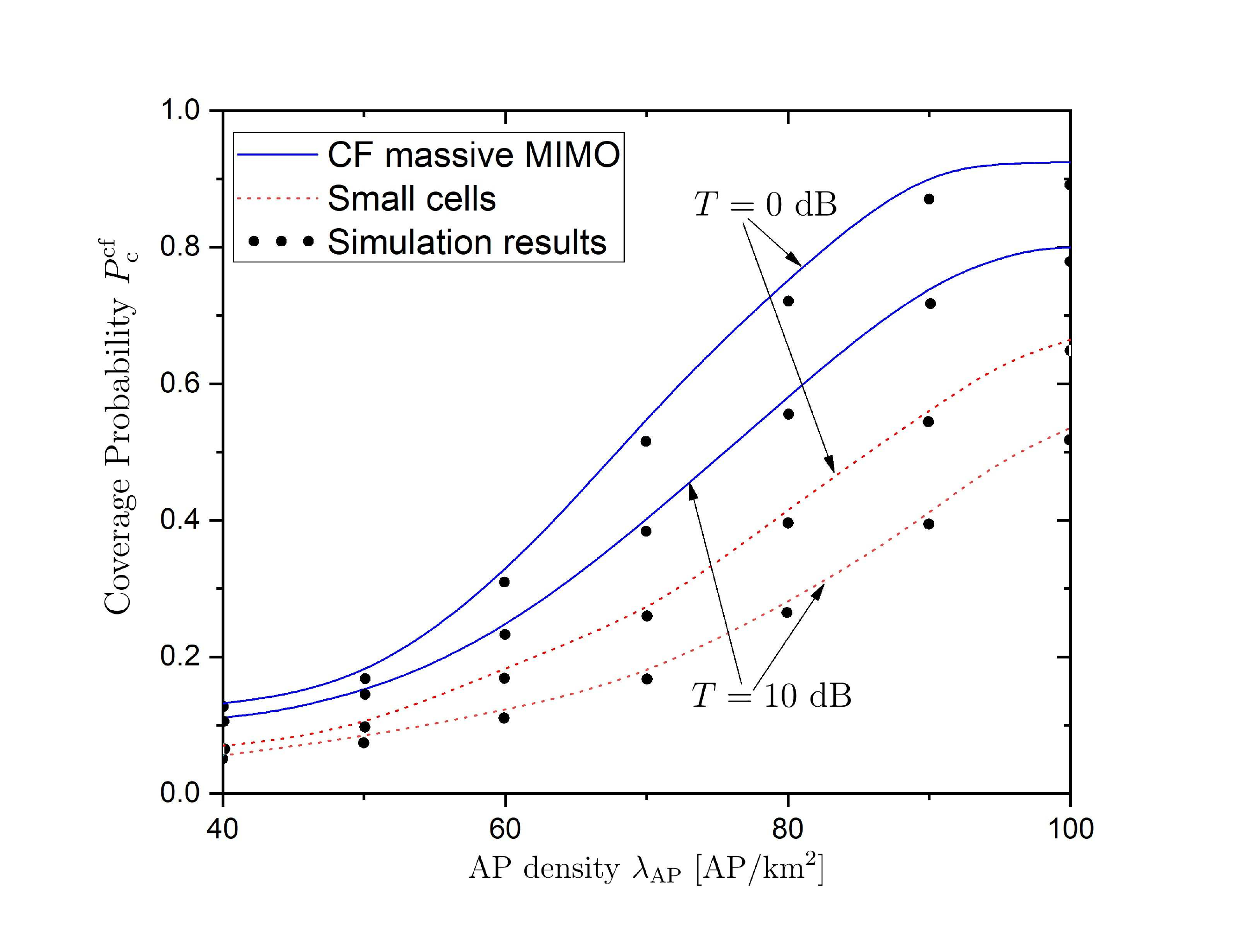}
			\caption{\footnotesize{Coverage probability for varying target SINR $T$ versus the AP density $\lambda_{\mathrm{AP}}$ for both CF massive MIMO systems and SCs.}}
			\label{Fig2}
		\end{center}
	\end{figure}
	
	\begin{figure}[!h]
		\begin{center}
			\includegraphics[width=0.95\linewidth]{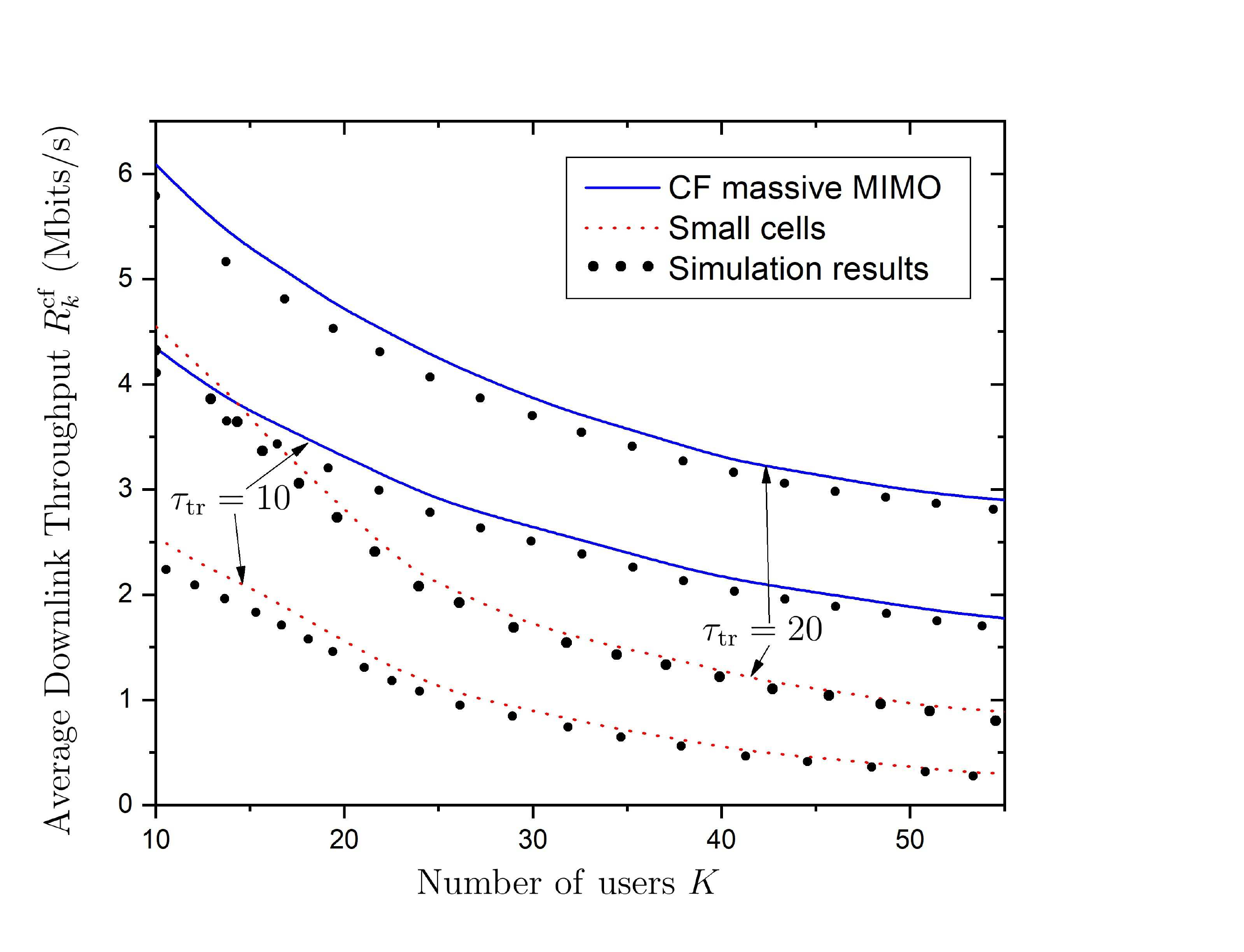}
			\caption{\footnotesize{Average downlink achievable rate for varying length of uplink training period $\tau_{\mathrm{tr}}$ versus the number of users $K$ for both CF massive MIMO systems and SCs ($\lambda_{\mathrm{AP}}=80~\mathrm{APs/km^{2}}$).}}
			\label{Fig3}
		\end{center}
	\end{figure}
	
	\begin{figure}[!h]
		\begin{center}
			\includegraphics[width=\linewidth]{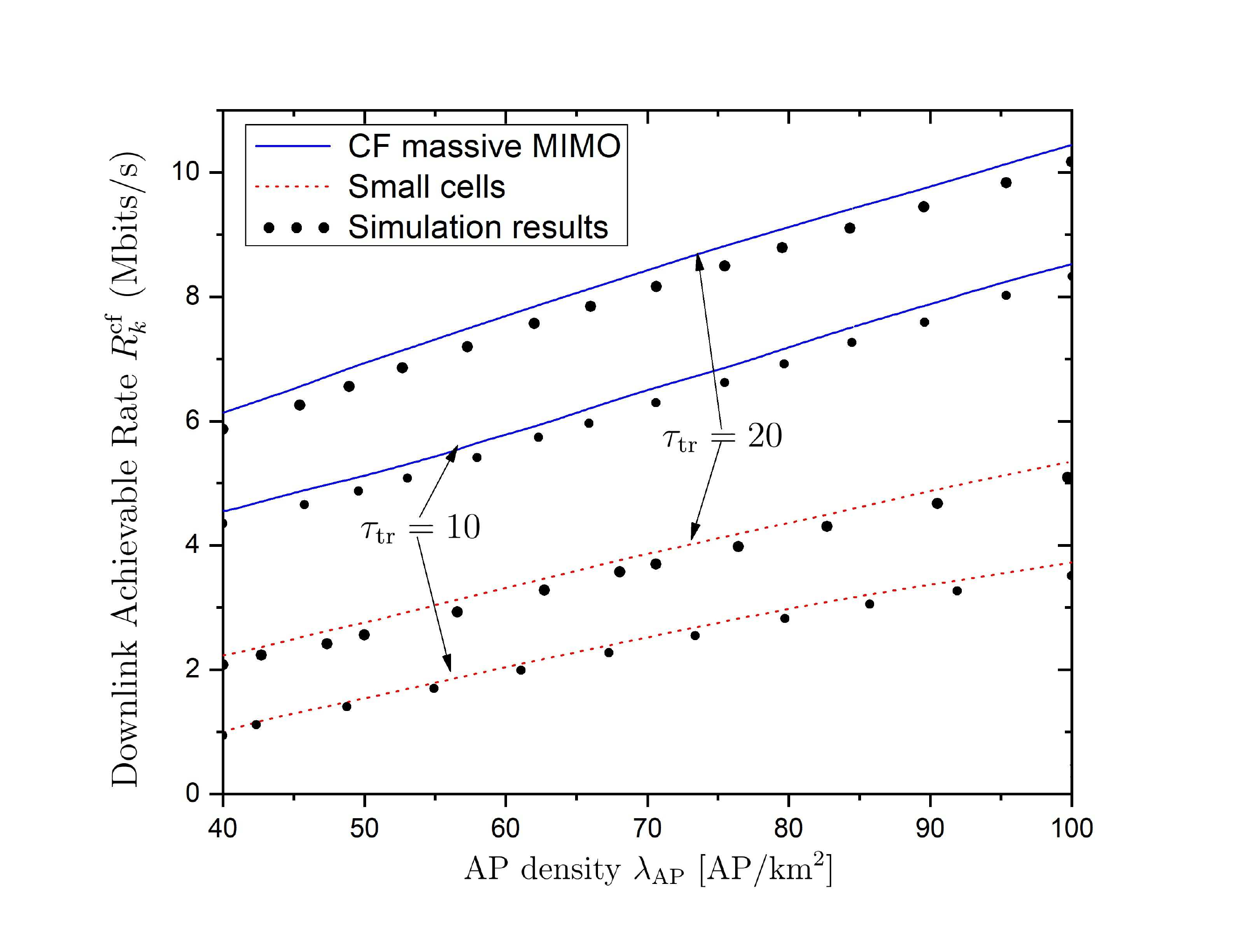}
			\caption{\footnotesize{Average downlink achievable rate for varying length of uplink training period $\tau_{\mathrm{tr}}$ versus the AP density $\lambda_{\mathrm{AP}}$ for both CF massive MIMO systems and SCs.}}
			\label{Fig4}
		\end{center}
	\end{figure}

	% , but in the interference-limited regime the gap is reduced. The underlying reason is the extra distance diversity emerging due to the different distances from the distributed APs to the user with a comparison to conventional massive MIMO systems where the antennas are coolocated~\cite{Spectral Efficiency Analysis in Cell-Free Massive MIMO Systems with Zero-Forcing Detector}.

		% 		In Fig.~\ref{Fig2}, firstly, we assess the performance of the proposed bound by varying the AP density. Specifically, it is shown the tightness of the proposed bound against the SINR. It is evident that the tightness is very good, however, it is relaxed as $ \lambda_{\mathrm{AP}}$ increases. Although someone would expect that the bound would become tighter with $ \tilde{M}\sim \lambda_{\mathrm{AP}}$ due to the use of the DE analysis, this contradiction appears due to the Alzer's inequality. Next, 
		In Fig.~\ref{Fig2}, we investigate the impact of AP density on the coverage probability for different values of the threshold $T$. In other words, this figure allows the comparison between CF massive MIMO systems and SCs with respect to the central fundamental characteristic between the two architectures, which is the identical spatial distribution of the nodes in terms of their density. To this end, it turns out that by increasing the node density in CF massive MIMO systems and SCs, the coverage probability increases and saturates at high AP density.
		% due to inter-user and inter-cell interference, respectively.
		This behavior is already known for SCs in the case of single-slope path loss models~\cite{Andrews2016}, but this figure also shows the performance of CF massive MIMO systems independently, and in parallel, allows the comparison between the two network architectures. Notably, in such cases, the provided coverage by CF massive MIMO systems is higher than SCs as the density of the nodes increases regardless of the exact values of the SINR threshold because of the conditions of favorable propagation and channel hardening met in the former architecture. Moreover, a higher threshold reduces the coverage probability since it is less possible to achieve certain coverage at higher values. Furthermore, the higher the AP density, the higher the performance gap between the two architectures because CF massive MIMO systems take more advantage of cooperation among the APs and the massive MIMO property in terms of channel hardening and favorable propagation. Regarding the saturation at high AP density, this independence from $\lambda_{\mathrm{AP}}$ is the result of the SINR invariance described in~\cite{Andrews2016} and Sec.~\ref{downlink} of this work for SCs and CF massive MIMO systems, respectively. 
	
	% \textbf{In a practical scenario, the implementation of max-min power control (na valw edw ena section san to V.A.2 apo small cells gia na deixw pws ginetai to pc i na pw pws i analusi de parousiazetai for brevity and lack of space i oti exw mono to sim) is important for uniform quality of service across all users. For this reason, in Fig.~\ref{Graph1_1}, we examine the behavior of CF massive MIMO and small-cells. As already shown, the coverage probability is always higher than that of small cells regardless of the exact values of the SINR thresholds. The most interesting observation is the improvement of both systems after applying max-min power control provided in Section~\ref{PowerControlOptimization}. Furthermore, the improvement is higher at low SINR and starts reducing as the SINR increases.}}}{\tau_{c}} \right)S_{k}$, where $S_{k}$ is given by~\cite[Th.~$2$]{Papazafeiropoulos2017}, while $\tau_{\mathrm{up}}$ and $\tau_{\mathrm{d}}$ are the lengths of the uplink and downlink training phases. Note that, in both cases, we have taken into account the channel estimation overhead.
	
In Fig.~\ref{Fig3}, we study the impact of the duration of the training phase and the number of users on the achievable rate on both CF massive MIMO systems and SCs when $\lambda_{\mathrm{AP}}=80~\mathrm{APs/km^{2}}$. As expected, as the number of users $K$ increases the system performance worsens. The main source of this deterioration comes from the fact that pilot contamination becomes more severe as can be noticed by~\ref{SINR}. The same result takes place by reducing the duration of the training period $\tau_{\mathrm{tr}}$. Another main reason for the rate decrease is the multi-user interference shown in the denominator of~\ref{SINR}. Actually, the interference in SCs is more prominent because CF massive MIMO systems take advantage of the favorable propagation. Relied on this property, we observe that for a given training period the gap between CF and SC systems increases with $K$, since the interference increases. In addition, by increasing the interference, i.e., when $ K $ grows, CF massive MIMO systems perform better than SCs because the former enjoys cooperative multipoint joint processing which is more robust at higher interference. Hence, in the case that $\tau_{\mathrm{tr}}=20$ samples, the gap between CF and SCs increases from $ 1.6~\mathrm{Mbits/s} $ to $ 2.1~\mathrm{Mbits/s} $ when $ K=10 $ and $ K=55 $, respectively.
	
Fig.~\ref{Fig4} shows the achievable rate against the AP density in the cases of both CF massive MIMO and SC systems. By increasing $ \tau_{\mathrm{tr}} $, the estimated channel is improved in all cases due to less pilot contamination, and thus, the rate increases. Moreover, as anticipated, an increase in $\lambda_{\mathrm{AP}}$ increases the rate as also described in Sec.~\ref{AchievableSpectralEfficiency}, which agrees with the behavior of single-slope path loss models in SCs~\cite{Andrews2016}. Actually, the rate in both CF massive systems and SCs increases with increasing the mean number of APs due to the array gain and diversity gain, respectively, as mentioned in~\cite{Ngo2017}. However, CF massive MIMO systems present a higher rate for several reasons. In particular, CF massive MIMO systems perform much better than SCs with increasing $ \lambda_{\mathrm{AP}}$ because they take advantage of the achievable favorable propagation and channel hardening. Furthermore, as the AP density increases, the rate of CF systems is higher because the benefit from the cooperation among the APs increases. Nevertheless, the gap between the CF lines increases since the advantage from the AP cooperation increases by exploiting better the interference corresponding to a certain duration of the training phase. This property is basically justified by the reduction of the impact of pilot contamination as $\tau_{\mathrm{tr}}$ increases. In other words, CF massive MIMO systems are more robust against pilot contamination as the mean number of APs increases. Hence, when $ \lambda_{\mathrm{AP}}=40~\mathrm{APs/km^{2}}$, the gap is almost $ 1.6~\mathrm{Mbits/s} $ while when $ \lambda_{\mathrm{AP}}=100~\mathrm{APs/km^{2}}$, the gap has increased to almost $ 2~\mathrm{Mbits/s} $. At these differences of AP density, the gap is not such big but it becomes bigger when more APs are employed.
	
	\begin{figure}
		\begin{center}
			\subfigure[]{
				\includegraphics[scale=.35]{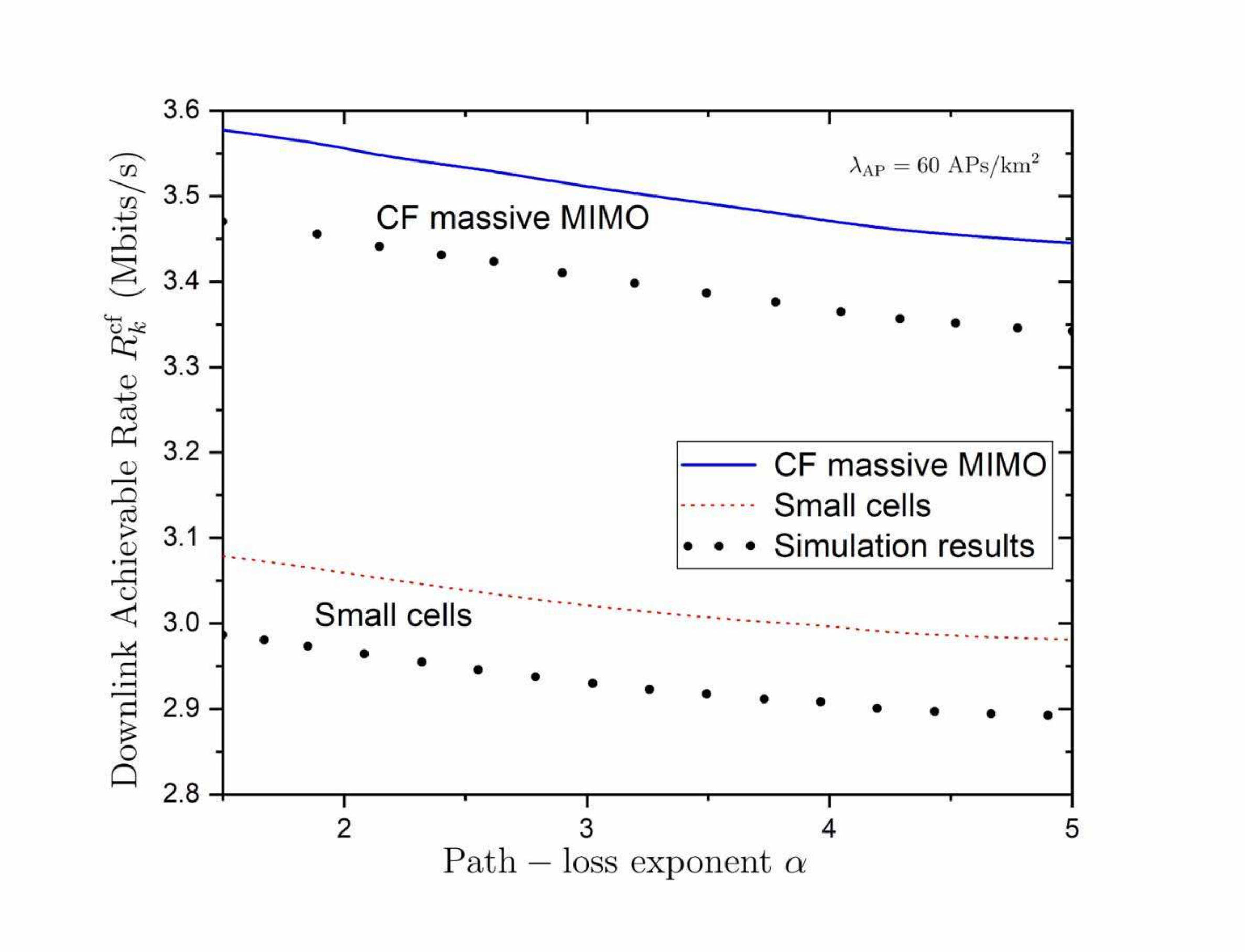}
				\label{fig_6and7combined1}
			}
			\subfigure[]{
				\includegraphics[scale=.35]{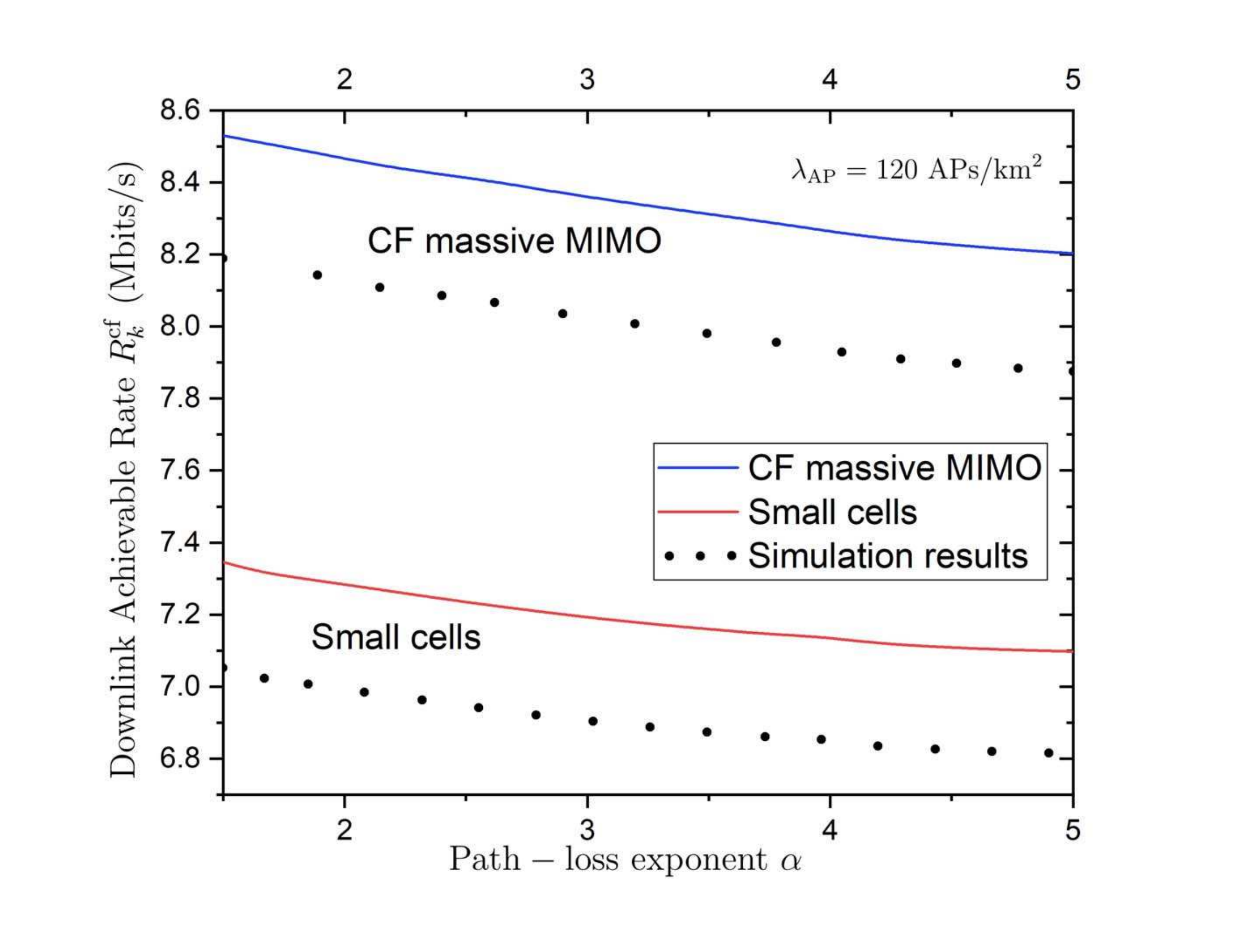}
				\label{fig_6and7combined2}
			}
		\end{center}\vspace{-15 pt}
		\caption{\footnotesize{Average downlink achievable rate for varying AP density $\lambda_{\mathrm{AP}}$ versus the path-loss exponent $\al$ for both CF massive MIMO systems and SCs. (a) $\lambda_{\mathrm{AP}}=60~\mathrm{APs/km^{2}}$, (b) $\lambda_{\mathrm{AP}}=120~\mathrm{APs/km^{2}}$.}}\vspace{-15 pt}
	\end{figure}
	
Figures \ref{fig_6and7combined1} and \ref{fig_6and7combined2} present the achievable rate versus the path-loss exponent $\al$ in the architectures of CF massive MIMO and SC systems for $\lambda_{\mathrm{AP}}=60~\mathrm{APs/km^{2}}$, and $\lambda_{\mathrm{AP}}=120~\mathrm{APs/km^{2}}$, respectively. It can be observed that the rate decreases monotonically with an increase of $\al$ for both CF massive MIMO and SC systems. Especially, the reduction of the rate is lower for larger values of $\al$, while it is higher for smaller values of $\al$. This observation implies that users far from the APs can barely affect the rate, while the users, being closer to the APs, affect strongly the rate. In parallel, these figures reveal that a larger number of APs brings an improvement in the performance of the system as explained before.

	% \subsubsection{Power Control and Optimization}
	\section{Conclusion} \label{Conclusion} 
	CF massive MIMO systems is a promising deployment paradigm for next-generation networks by embodying the distributed MIMO and massive MIMO architectures while no cell boundaries exist. In this work, given that CF massive MIMO systems have attracted a lot of attention but previous works did not account for a realistic model for the spatial randomness of the APs in the analysis despite their high irregularity, we took advantage of PPP modeling and derived tractable and closed-form expressions for the coverage probability and the achievable rate. Especially, this is the unique work providing the coverage probability of CF massive MIMO systems with PPP distributed APs.
	
	The analysis and numerical results revealed that CF massive MIMO systems outmatch SCs design with regard to both coverage and rate since it takes advantage of benefits from network MIMO and canonical massive MIMO systems. Especially, the larger the average number of APs, the higher the resultant coverage and achievable rate. Moreover, by increasing the AP density, the coverage increases up to a certain point while increasing the number of users the performance. Notably, this deterioration is less in CF massive MIMO systems exploiting the benefits of favorable propagation. Finally, the users located closer to the APs have a greater impact on the rate, and the larger the average number of APs is involved, the larger the impact eventuates.

	% \section{Conclusion} \label{Conclusion} 
	\begin{appendices}
		% \section{Useful Lemmas}
		\section{Proof of Proposition~\ref{PropDetSINR}}\label{SINRproof}
		We divide each term of~\eqref{SINR} by the number $\mathcal{W}$ raised to $2$, in order to derive the correspondsing DEs. Starting with the desired signal power, we have
		\begin{align}
		S_{k}
		&= \frac{
			\mu}{\mathcal{W}^{2}} \Big|\EE\left[{\bh}_{k}^{\H}\bC_{k}\hat{\bh}_{k} \right]\!\!\Big|^{2}.\label{normalizedMRT}
		\end{align}
		First, the normalization parameter can be written by means of~\eqref{eq:lamda} and the expression of MRT precoding as
		\begin{align}
		\mu
		% \frac{K}{\EE\Big[\frac{1}{M}\tr \bC_{k}^{2}\hat{\bH} \hat{\bH}^{\H} \Big]} \nn\\
		&= \frac{1}{\frac{1}{\mathcal{{W}}}\EE\Big[\sum_{i=1}^{K}\hat{\bh}_{i}^{\H}\bC_{i}^{2}\hat{\bh}_{i}\Big]}\nn\\
		&\asymp \left(\frac{1}{\mathcal{{W}}} \sum_{i=1}^{K}\tr \bC_{i}^{2}\bPhi_{i} \right)^{-1}\nn\\
		&= \left(\frac{1}{\mathcal{{W}}}\sum_{i=1}^{K}\tr\bC_{i} \right)^{-1}\nn\\
		&= \bar{\mu},\label{desired1MRT}
		\end{align}
		where we have applied~\cite[Thm. 3.7]{Couillet2011}\footnote{Given two infinite sequences $a_n$ and $b_n$, the relation $a_n\asymp b_n$ is equivalent to $a_n - b_n \xrightarrow[ n \rightarrow \infty]{\mbox{a.s.}} 0$.}.
		Note that $\bH\!=\! \big[\bh_{1},\ldots, \bh_{K} \big] \!\in \bbC^{\mathcal{W} \times K}$ is the channel matrix from the APs to all users. The DE of~\eqref{normalizedMRT} is obtained as
		\begin{align}
		\frac{1}{\mathcal{W}}\EE\!\left[{\bh}_{k}^{\H}\bC_{k}\hat{\bh}_{k} \right]
		&\!=\! \frac{1}{\mathcal{W}}\EE\!\left[\left(\hat{\bh}^\H_{k} \!+ \!\tilde{\bee}^\H_{k} \right)\bC_{k}\hat{\bh}_{k} \right]\label{desired2MRT}\\
		&=\frac{1}{\mathcal{W}}\EE\Big[{\hat{\bh}^\H_{k}\bC_{k}\hat{\bh}_{k} }\Big]\nn\\
		&\asymp\frac{1}{\mathcal{W}}\tr \bC_{k}\bPhi_{k}\nn\\
		&=1,\label{desired4MRT}
		\end{align}
		where in~\eqref{desired2MRT} we have taken into account that $\hat{\bg}_{k}$ and $\tilde{\bee}_{k}$ are uncorrelated, and next we have applied~\cite[Thm. 3.7]{Couillet2011} since all conditions are satisfied. Note that the matrices commute because they are diagonal. Therefore, the DE signal power $\bar{S}_{k} = \lim_{\mathcal{W} \rightarrow \infty} S_{k}$ is written as
		\begin{align}
		\bar{S}_{k} =\bar{\mu}.\label{eq:theorem4.5MRT}
		\end{align}
		% where 
		% $\bar{\mu}$ is the DE of $\mu$ given by~\eqref{desired1MRT}, and
		% ${\delta}_{k}=\frac{1}{M}\EE\left[\tr\bL_{k}\right]$.\\
		This result verifies the chosen scaling regarding the precoder.
		Next, we focus on the derivation of DEs of the denominator terms. The first term, involving the variance, is obtained as
		\begin{align}
		\frac{1}{\mathcal{W}^{2}} \var\left[ {\bh}_{k}^{\H}\bC_{k}\hat{\bh}_{k}\right]-\frac{1}{\mathcal{W}^{2}} \EE\bigg[\Big|{\tilde{\bee}_{k}^{\H}\bC_{k}\hatvh
			_{k} }\Big|^2\bigg]\xrightarrow[ \mathcal{W} \rightarrow \infty]{\mbox{a.s.}} 0.\label{DEvariance} 
		\end{align}
		In~\eqref{DEvariance}, we have exploited the property of the variance operator $\mathrm{var}\left[ x \right]=\EE[x^{2}]- \EE^{2}[x] $ and that $\tilde{\bee}_{k}={\bh}_{k}-\hat{\bh}_{k}$. In addition, we have applied~\cite[Thm. 3.7]{Couillet2011}. After applying again this theorem, we have
		\begin{align}
		\frac{1}{\mathcal{W}^{2}} \EE\bigg[\Big|{\tilde{\bee}_{k}^{\H}\bC_{k}\hatvh
			_{k} }\Big|^2\bigg]
		&\asymp \frac{{1}}{\mathcal{W}^{2}} \tr \bC^{2}_{k} \bPhi_{k}\left( \bL_{k}- \bPhi_{k}\right) \nn\\
		&=\frac{{1}}{\mathcal{W}^{2}} \tr \left( \bD\bL_{k}^{-1}- \Id_{\mathcal{W}}\right). \label{eq:theorem4.7MRT}
		\end{align}
		The final term becomes 
		\begin{align}
		\frac{1}{\mathcal{W}^{2}}\EE\left[ \Big|{\bh}_{k}^{\H}\bC_{i}\hat{\bh}_{i}\Big|^{2}\right] 
		&\asymp\frac{1}{\mathcal{W}^{2}}\tr\bC_{i}^{2}{\bPhi}_{i}\bL_{k} \nn\\
		&=\frac{1}{\mathcal{W}^{2}}\tr \bD\bL_{i}^{-2}\bL_{k}\label{lasttermMRT}
		\end{align}
		since ${\bh}_{k} $ and $\hatvh_{i} $ are mutually independent. 
		Taking into account that the SINR is conditioned on $\bL_{k}$, substitution of~\eqref{desired1MRT},~\eqref{eq:theorem4.5MRT},~\eqref{eq:theorem4.7MRT}, and~\eqref{lasttermMRT} into~\eqref{SINR} completes the proof.
		
		\section{Proof of Theorem~\ref{theoremCoverageProbability}}\label{CoverageProbabilityproof}
		The proof starts by writing the terms of~\eqref{DESINR}, including the block matrix traces, as summations over the diagonal elements (element-wise). Thus, the DE SINR, conditioned on the distances $r_{mi}$ for $i=1,\ldots, K$, is obtained as
		\begin{align}
		\bar{\gamma}_{k}\asymp\frac{ M N}{ \frac{{1}}{M} \sum_{i=1}^{K}\sum_{m=1}^{M} d_{m}l_{mi}^{-2}\left( l_{mk}+\frac{MN}{\rho_{\mathrm{d}}}\right)-1
		}.\label{gamma1} 
		\end{align}
		We continue with the derivation of distribution of the SINR, conditioned on a realization of $l_{mi}$ for $i=1,\ldots, K$, i.e., ${\mathbb{P}}\!\left( \bar{\gamma}_{k}>T|l_{m1},\ldots,l_{mK}\right)$. Specifically, after substituting~\eqref{gamma1} inside the expression of the coverage probability, and by means of several algebraic manipulations, we obtain~\eqref{coverage5}. Hence, the conditional coverage probability is written as shown at the top of next page
		\begin{longequation*}[tp]

			\begin{align}
			& \!\!\!{\mathbb{P}}\left(\bar{\gamma}_{k}>\!T|r_{m1},\ldots,r_{mK}\right) \!=
			%%%%%%%%%%%%%%%%%%%
			{\mathbb{P}}\Bigg( \!\!\mathcal{W}\!>{T}\Bigg(\!\! \frac{1}{M}\sum_{i=1}^{K}\sum_{m=1}^{M} d_{m}l_{mi}^{-2}\left( l_{mk}+\frac{MN}{ \rho_{\mathrm{d}}}\right)-1\!\!\Bigg)\!\!\Bigg) \label{coverage5}\\
			%%%%%%%%%%%%%%%%%%%
			& \!\!\!\approx 
			\tilde{\mathbb{P}}\Bigg( \tilde{g}\!>{T}\Bigg( \frac{1}{M} \sum_{i=1}^{K}\sum_{m=1}^{M} d_{m}l_{mi}^{-2}\left( l_{mk}+\frac{MN}{ \rho_{\mathrm{d}}}\right)-1\Bigg)\!\!\Bigg)\label{coverage6}\\
			%%%%%%%%%%%%%%%%%%%
			&\!\!\!\approx 1-\!\Bigg(\!1-\exp\Bigg( -\eta {T} \Bigg( \frac{1}{M} \sum_{i=1}^{K}\sum_{m=1}^{M} d_{m}l_{mi}^{-2}\left( l_{mk}+\frac{MN}{ \rho_{\mathrm{d}}}\right)-1\Bigg)\!\! \Bigg)\!\! \Bigg)^{\!\tilde{\mathcal{W}}} \label{coverage71}\\
			%%%%%%%%%%%%%%%%%%%
			&\!\!\!= \sum^{\tilde{\mathcal{W}}}_{n=1} \!\binom{\tilde{\mathcal{W}}}{n}\!\left( -1 \right)^{n+1} \exp\!\bigg(\!\! \!-n \eta {T}\Bigg(\!\! \frac{1}{M} \sum_{i=1}^{K}\sum_{m=1}^{M} d_{m}l_{mi}^{-2}\left( l_{mk}+\frac{MN}{ \rho_{\mathrm{d}}}\right)-1\Bigg)\!\! \Bigg)\!.\!\! \label{coverage7}
			\end{align} 
			\hrule
		\end{longequation*}
		
		In~\eqref{coverage6}, we have approximated the constant number $\mathcal{W}$ by considering the dummy gamma variable $\tilde{g}$, having mean $\mathcal{W}=MN$ and shape parameter $\tilde{\mathcal{W}}=\EE\left[ W\right] = \tilde{M} N$. This approximation becomes tighter as $\tilde{\mathcal{W}}$ goes to infinity~\cite{Alzer1997}, since $\lim_{y \to \infty}\frac{y^{y}x^{y-1}\mathrm{e}^{-yx}}{\Gamma\left( y \right)}=\delta\left( x-1 \right)$ with $\delta\left( x \right)$ being Dirac's delta function. Notably, this approximation, used in~\cite{Bai2016}, becomes more precise in our system model involving a large number (massive) of APs. Note that the precision increases as the number of antennas per AP increases. In~\eqref{coverage71}, we have applied Alzer's inequality (see~\cite[Lemma~1]{Alzer1997}), where $\eta=\tilde{\mathcal{W}} \left( \tilde{\mathcal{W}}! \right)^{-\frac{1}{\tilde{\mathcal{W}}}}$, while afterwards, we have used the Binomial theorem. Note that~\eqref{coverage71} does not contain any random variable since this expression is conditioned on the distances. 
		%For this reason, the expectation operator disappears in~\eqref{coverage7}.\\
		Next, the coverage probability is obtained by evaluating the expectation of~\eqref{coverage7} with respect to AP locations given that the distances between the APs and the users are uniformly distributed. Thus, we have
		\begin{align}
		P_{\mathrm{c}}^{\mathrm{cf}}\!&=\!\sum^{\tilde{\mathcal{W}}}_{n=1} \!\binom{\tilde{\mathcal{W}}}{n}\!\left( -1 \right)^{n+1}\nn\\
		&\times \EE\left[ \exp\!\bigg(\!\! -n \eta {T}\Bigg( \frac{1}{M} \sum_{i=1}^{K}\sum_{m=1}^{M} \mathcal{I}_{mk}-1\Bigg)\!\! \Bigg)\!\! \right]\label{pc2}\\
		&\!\ge\! 
		\sum^{\tilde{\mathcal{W}}}_{n=1} \!\binom{\tilde{\mathcal{W}}}{n}\!\left( -1 \right)^{n+1} e^{ {n\eta {T}\lambda_{\mathrm{AP}}}{ } } \nn\\
		&\times\exp\!\bigg(\!\! -{n\eta 
			{T}}\,\EE\left[\frac{1}{M} \sum_{i=1}^{K}\sum_{m=1}^{M} \mathcal{I}_{mk}\right] \Bigg),\label{pcproof} 
		\end{align}
		where we have set $\mathcal{I}_{mk}= d_{m}l_{mi}^{-2}\left( l_{mk}+\frac{MN}{ \rho_{\mathrm{d}}}\right)$ and have applied Jensen's inequality since $\mathrm{exp}\left( \cdot \right)$ is a convex function. 
		% Note that we have included $M$ inside the expectation because it is a random variable with mean value given by~\eqref{meanValue}. 
		By focusing on the derivation of the expectation, we have 
		\begin{align}
		&\lim_{R \to \infty} \! \EE\Bigg[\frac{1}{M} \sum_{i=1}^{K}\sum_{m=1}^{M} \mathcal{I}_{mk}\Bigg] \nn\\
		&=\lim_{R \to \infty} \EE_{M}\!\left[\!\EE_{|M} \!\!\left[\frac{1}{{M}}{\displaystyle\sum_{i=1}^{K}\!\sum_{m\in \Phi_{\mathrm{AP}}\cap B\left( o,R \right)}^{M}\! \!\!\!\!\!\!\!\!\!\mathcal{I}_{mk}}{}|M=\Phi\!\left( B\left( o,R \right) \right)\right]\right] \label{SINRFiniteproof141}\\
		&= \sum_{i=1}^{K}\lim_{R \to \infty} \EE_{M}\!\left[\mathcal{I}_{mk}\right] \label{SINRFiniteproof151}\\
		&= \sum_{i=1}^{K}\EE \left[\left( \sum_{j=1}^{K}|\bpsi_{j}^{\H}\bpsi_{k}|^{2}l_{mj}+\frac{1}{{\tau_{\mathrm{tr}} \rho_{\mathrm{tr}}}} \right)\!\!\left( l_{mk}+\frac{\EE[M]N}{ \rho_{\mathrm{d}}}\right)l_{mi}^{-2}\right] \label{SINRFiniteproof161}
		\end{align}
		where in~\eqref{SINRFiniteproof141}, we have assumed a ball of radius $R$ centered at the origin that contains $M=\Phi\left( B\!\left( o,R \right) \right)$ points with $ S\!\left( \mathcal{A} \right)=|B\!\left( o,R \right)\!|$. By conditioning on this area of radius $R$ and on the number of points in this area, $M$ in the denominator cancels out with the number of points inside the ball. In~\eqref{SINRFiniteproof161}, we have substituted $\mathcal{I}_{mk}= d_{m}l_{mi}^{-2}\left( l_{mk}+\frac{MN}{ \rho_{\mathrm{d}}}\right)$ and $d_{m}$.
		Then, we substitute $d_{m}$, and we result in
		\begin{align}
		&\!\!\!\!\!\!\mathcal{I}_{1}\!= \!\EE\left[ \sum_{i=1}^{K}\left( \sum_{j=1}^{K}|\bpsi_{j}^{\H}\bpsi_{k}|^{2}l_{mj}+\frac{1}{{\tau_{\mathrm{tr}} \rho_{\mathrm{tr}}}} \right)l_{mi}^{-2}l_{mk}\right]\nn\\
		&\!\!\!\!\!\!=\!\EE\!\!\left[ \!\sum_{i=1}^{K}\! \sum_{j=1}^{K}\!|\bpsi_{j}^{\H}\bpsi_{k}|^{2} l_{mj}l_{mi}^{-2}l_{mk}\!\right]\!+\!\frac{1}{{\tau_{\mathrm{tr}} \rho_{\mathrm{tr}}}}\EE\!\left[ \sum_{i=1}^{K} l_{mi}^{-2}l_{mk}\right]\label{Expectationequation} 
		\end{align}
		and 
		\begin{align}
		\mathcal{I}_{2}&= \frac{ \lambda_{\mathrm{AP}} N}{ \rho_{\mathrm{d}}}\EE\left[ \sum_{i=1}^{K}\left( \sum_{j=1}^{K}|\bpsi_{j}^{\H}\bpsi_{k}|^{2}l_{mj}+\frac{1}{{\tau_{\mathrm{tr}} \rho_{\mathrm{tr}}}} \right)l_{mi}^{-2}\right].\label{Expectationequation12} 
		\end{align}
		Regarding the first part of~\eqref{Expectationequation}, we have
		\begin{align}
		\!\!&\EE\left[\sum_{i=1}^{K} \sum_{j=1}^{K}|\bpsi_{j}^{\H}\bpsi_{k}|^{2}l_{mj}l_{mi}^{-2}l_{mk}\right]\nn\\
		&=\left\{\begin{array}{ll}
		\sum_{i=1}^{K} |\bpsi_{i}^{\H}\bpsi_{k}|^{2} \EE \left[ l_{mi}^{-1}l_{mk}\right]&~\mathrm{if}~j=i\\
		\sum_{i=1}^{K} \EE \left[ l_{mi}^{-2}l_{mk}^{2} \right] &~\mathrm{if}~j= k\\
		\sum_{j\ne i,k}^{K}
		|\bpsi_{j}^{\H}\bpsi_{k}|^{2}\EE \left[ l_{mj} l_{mi}^{-2}l_{mk} \right] &~\mathrm{otherwise}
		\end{array}\!.\label{Expectationequation1}
		\right. 
		\end{align}
		The expectation in the first branch of the right hand side of~\eqref{Expectationequation1} for $i\ne k$ gives
		\begin{align}
		\EE \left[ l_{mi}^{-1}l_{mk} \right]
		& \ge \frac{1}{\EE \left[ l_{mi}^{}\right]}\EE\left[ l_{mk}^{}\right]\label{firstPart0} \\
		&=1 \label{firstPart2},
		\end{align}
		where~\eqref{firstPart0} takes advantage of Jensen's inequality, and then,~\eqref{firstPart2} is obtained since the two variables have the same marginal distribution. By following similar steps, the derivation of the expectation in the second branch is straightforward, while the last branch becomes
		\begin{align}
		\EE \left[ l_{mj}^{} l_{mi}^{-2}l_{mk}^{} \right] &= \left\{\begin{array}{ll}
		\EE \left[ l_{mj}^{} l_{mk}^{-1} \right]&~\mathrm{if}~i=k\\
		\EE \left[ l_{mj}^{} l_{mi}^{-2}l_{mk}^{} \right]&~\mathrm{if}~i\ne k
		\end{array}\label{Expectationequation2}
		\right. .
		\end{align}
		If $i=k$, the expression in the first branch is identical to~\eqref{firstPart0}, and the result is the same. The remaining term in~\eqref{Expectationequation2} is written as
		\begin{align}
		\EE \left[ l_{mj}^{} l_{mi}^{-2}l_{mk}^{} \right]&=\EE \left[ l_{mj}^{} \right] \EE\left[ l_{mi}^{-2}\right] \EE\left[ l_{mk}^{} \right]\label{thirdPart1}\\
		&\ge \EE \left[ l_{mj}^{} \right] \EE\left[ l_{mi}^{-1}\right] ^{2}\EE\left[ l_{mk}^{} \right]\label{thirdPart2}\\
		&\ge 1,\label{thirdPart3}
		\end{align}
		where~\eqref{thirdPart1} considers the independence among the variables, while~\eqref{thirdPart2} exploits the inequality $\EE\left[ x^{2}\right] \ge \EE\left[ x\right]^{2} $. Last,~\eqref{thirdPart3} follows basically the same steps as those
		taken in~\eqref{firstPart2}.
		The second part of~\eqref{Expectationequation} becomes
		\begin{align}
		\EE\left[ \sum_{i=1}^{K} l_{mi}^{-2}l_{mk}\right]&= \left\{\begin{array}{ll}
		\EE\left[ l_{mi}^{-1}\right]&~\mathrm{if}~i=k \\
		\sum_{i\ne k}^{K}\EE\left[ l_{mi}^{-2}l_{mk}^{}\right]&~\mathrm{if}~i\ne k \end{array}.\label{Expectationequation3}
		\right. \end{align}
		Let us now tackle both expectations separately. The former, i.e., $\EE\left[ l_{mi}^{-v}\right]$ for $v=1$ results in
		\begin{align}
		\EE\left[ l_{mi}^{-v}\right]&=\EE\left[ \frac{1}{l_{mi}^{v}}\right]\label{secondpart} \\
		&\ge \frac{1}{\EE\left[l_{mi}^{v}\right]},
		\end{align}
		where Jensen's inequality has been applied in~\eqref{secondpart}. The final expression is obtained by computing $\EE\left[l_{mi}^{v}\right]$ as
		\begin{align}
		\EE\left[l_{mi}^{v}\right]&=2 \pi \left( \int_{0}^{1}{ y} \mathrm{d}y+\int_{1}^{ \infty} y^{-va+1} \mathrm{d}y \right) \label{eq51} \\
		&=\frac{ v \al \pi }{v \al-2 }. \label{eq6}
		\end{align}
		The latter expectation in~\eqref{Expectationequation3} is computed as
		\begin{align}
		\EE\left[ l_{mi}^{-2}l_{mk}^{}\right]&=\EE\left[ l_{mi}^{-2}\right] \EE\left[ l_{mk}^{}\right]\\
		&\ge \EE\left[ l_{mi}^{-1}\right]^{2} \EE\left[ l_{mk}^{}\right]\\
		&\ge \frac{\EE\left[ l_{mk}^{}\right]}{\EE\left[ l_{mi}^{}\right]^{2}}\\
		&=\frac{1}{\EE\left[ l_{mi}^{}\right]}\\
		&=\frac{ \al-2 }{ \al \pi},
		\end{align}
		where we have used similar techniques as before. 
		% Also, we have
		% \begin{align}
		% \EE\left[ \sum_{i=1}^{K}\sum_{m=1}^{M} d_{m}l_{mi}^{-2}\right] =
		% \end{align}
		% 
		By substituting all these expressions in~\eqref{Expectationequation}, we obtain $\mathcal{I}_{1}$. Similarly, $\mathcal{I}_{2}$ is obtained as
		\begin{align}
		\mathcal{I}_{2}&= \frac{K \lambda_{\mathrm{AP}} N}{ \rho_{\mathrm{d}}\al \pi}\left(
		\sum_{j=1}^{K} |\bpsi_{j}^{\H}\bpsi_{k}|^{2} \left( \al-2 \right)+\frac{\al-1}{{\tau_{\mathrm{tr}} \rho_{\mathrm{tr}}}}\right) \label{Expectationequation13}.
		\end{align}
		Having derived $\mathcal{I}_{1}$ and $\mathcal{I}_{2}$, we substitute their expressions in~\eqref{SINRFiniteproof161}, and we eventually complete the proof resulting first in~\eqref{pc1}, and next, in~\eqref{pc2} after using the binomial theorem.

		\section{Proof of Theorem~\ref{PropDetSINRDistances2}}\label{SINRproofDistances2}
		The proof is split in two subsections. In the first subsection, we provide a more tractable bound than~\eqref{Ratebar} that will allow to average over a PPP realization of the APs, while the second subsection includes the derivation of the PPP averaged inverse SINR.
		\subsection{Lower bound of the downlink SE}\label{LowerBoundDownlinkSE}
		Rewriting~\eqref{Ratebar} by means of the inverse of $\gamma_{k}$, and applying the Jensen inequality we have 
		\begin{align}
		\EE\left[\log_{2}\left( 1+\frac{1}{\gamma_{k}^{-1}} \right) \right] \ge \log_{2} \left( 1+\check{\gamma}_{k} \right),
		\end{align}
		where the expectation applies directly to the inverse of the SINR since $\check{\gamma}_{k}=\frac{1}{\EE\left[\gamma_{k}^{-1} \right]}$.
		\subsection{Derivation of $\check{\gamma}_{k}$}\label{LowerBoundDownlinkSE}
		After writing the trace of each matrix as the sum of its entry-wise elements, the expectation of the inverse of the SINR, given by~\eqref{gamma1}, is written as

		\begin{align}
		\EE\left[ \gamma_{k}^{-1} \right] &= \frac{1}{N}\EE\left[ \frac{{1}}{M^{2}} \left(\sum_{i=1}^{K} \sum_{m=1}^{M} \mathcal{I}_{mk}-M\right)\right] .\label{SINRFiniteproof11}
		\end{align}
		We are going to compute the expectation by considering a ball of radius $R$ centered at the origin including $M=\Phi\left( B\left( o,R \right) \right)$ points with $ S\!\left( \mathcal{A} \right)=|B\left( o,R \right)|$. Then, conditioning on this area of radius $R$ and on the number of points in this finite area, the application of the law of large numbers will take place. In the next step, we remove the conditioning regarding the number of points while we let $R \to \infty$, i.e., the area goes to infinity. Specifically, the expectation in the previous expression becomes
		\begin{align}
		&\EE\left[ \frac{{1}}{M^{2}} \left(\sum_{i=1}^{K} \sum_{m\in \Phi_{\mathrm{AP}}} \mathcal{I}_{mk}-M\right)\right]\label{SINRFiniteproof12} \\
		&=\lim_{R \to \infty} \EE\left[ \frac{{1}}{M^{2}} \left( \sum_{i=1}^{K} \sum_{m\in \Phi_{\mathrm{AP}}\cap B\left( o,R \right)} \mathcal{I}_{mk} -M \right)\right]\label{SINRFiniteproof13}\\
		&=\!\lim_{R \to \infty}\!\!\EE_{M}\bigg[\!\EE_{|M} \!\!\left[\! \frac{{1}}{M^{2}} \!\! \sum_{i=1}^{K} \sum_{m\in \Phi_{\mathrm{AP}}\cap B\left( o,R \right)} \!\! \!\!\!\!\!\!\!\!\!\mathcal{I}_{mk}|M=\Phi\!\left( B\left( o,R \right) \right)\right]\nn\\
		&\left. -\frac{1}{M}\right]\label{SINRFiniteproof14}\\
		&\approx \lim_{R \to \infty} \frac{1}{\EE_{M}[M]}\EE \Bigg[\sum_{i=1}^{K} l_{mi}^{-2} \left( \sum_{j=1}^{K}|\bpsi_{j}^{\H}\bpsi_{k}|^{2}l_{mj}\!+\!\frac{1}{{\tau_{\mathrm{tr}} \rho_{\mathrm{tr}}}} \right)\!\! l_{mk}\nn\\
		&+\frac{N}{\rho_\mathrm{d}} \sum_{i=1}^{K} l_{mi}^{-2}\!\! \left( \sum_{j=1}^{K}|\bpsi_{j}^{\H}\bpsi_{k}|^{2}l_{mj}\!+\!\frac{1}{{\tau_{\mathrm{tr}} \rho_{\mathrm{tr}}}} \right)\!\!\Bigg] \!-\!\frac{1}{\EE_{M}[M]},\label{SINRFiniteproof16}
		\end{align}
		where in \eqref{SINRFiniteproof13}, we have written the previous equation in terms of the ball of radius $R$. In \eqref{SINRFiniteproof14}, we condition on the number of points inside the ball. Then, given that the SINR has been derived by means of the DE analysis, which holds for $M \to \infty$, we are able to apply~\cite[Lemma 1]{Zhang2014a}. Thus, in \eqref{SINRFiniteproof16}, we have applied this lemma. Next, we have $\EE_{M}\left[ M\right]= \lambda_{\mathrm{AP}} |B\left( o,R \right)|$ while the other expectations in \eqref{SINRFiniteproof16} have already been derived in parts in Appendix~\ref{CoverageProbabilityproof}. Hence, $\check{\gamma}_{k}$ is obtained, and the proof is concluded.

	\end{appendices}

	\bibliographystyle{IEEEtran}

	\bibliography{mybib}

\end{document}